\newenvironment{customthm}[1]
{\innercustomthm}
{\endinnercustomthm}
\newtheorem{theorem}{Theorem}
\newtheorem{lemma}{Lemma}
\title{Majority Vote in Social Networks: Make Random Friends or Be Stubborn to Overpower Elites}
\author{ \href{https://orcid.org/0000-0000-0000-0000}{\hspace{1mm}Charlotte Out} \\
	Department of Computer Science\\
	ETH Zürich\\
	
	\texttt{chaout@student.ethz.ch} \\
	\And
	\href{https://orcid.org/0000-0000-0000-0000}{\hspace{1mm}Ahad N. Zehmakan} \\
	Department of Computer Science\\
	ETH Zürich\\
	\texttt{ahadn.zehmakan@gmail.com} \\
}
\date{}
\begin{document}
\maketitle

\begin{abstract}
Consider a graph $G$, representing a social network. Assume that initially each node is colored either black or white, which corresponds to a positive or negative opinion regarding a consumer product or a technological innovation. In the majority model, in each round all nodes simultaneously update their color to the most frequent color among their connections.

Experiments on the graph data from the real world social networks (SNs) suggest that if all nodes in an extremely small set of high-degree nodes, often referred to as the elites, agree on a color, that color becomes the dominant color at the end of the process. We propose two countermeasures that can be adopted by individual nodes relatively easily
and guarantee that the elites will not have this disproportionate power to engineer the dominant output color. The first countermeasure essentially requires each node to make some new connections at random while the second one demands the nodes to be more reluctant towards changing their color (opinion). We verify their effectiveness and correctness both theoretically and experimentally.

We also investigate the majority model and a variant of it when the initial coloring is random on the real world SNs and several random graph models. In particular, our results on the Erd\H{o}s-R\'{e}nyi and regular random graphs confirm or support several theoretical findings or conjectures by the prior work regarding the threshold behavior of the process.

Finally, we provide theoretical and experimental evidence for the existence of a poly-logarithmic bound on the expected stabilization time of the majority model.
\end{abstract}


\section{Introduction}
When facing a decision or forming an opinion about a topic such as a consumer product,
a technological innovation, or a political event, humans often consult friends, family or others in their close circle for advice. Additionally, we often consider the opinions of the figures whose opinions we value in some way; for example, politicians we usually agree with, celebrities whom we look up to, or well-established scientists. In this way, an individual's opinion is influenced by the opinions of the people around her. Furthermore, due to the rise of online social networking, opinions are formed and changed at a higher pace. Consequently, there has been a growing demand for a quantitative understanding of the opinion forming process.

Recently, within the field of computer science, especially computational social choice and algorithmic game theory, there has been a rising interest in developing and studying mathematical opinion diffusion models, which aim to mimic the process of opinion forming in a society. 
At a high level of abstraction, in these models one usually consider a graph $G$ and some initial coloring of the nodes, where each node is colored either black or white. This graph is meant to represent a social network, in which the agents are modeled as nodes and an edge between two nodes corresponds to a relation between the respective agents, e.g. friendship, common interests, or advice. The color of a node represents its opinion on an innovation or a political party, etc. After initialization, in each round a group of nodes update their color based on a predefined rule.

Plentiful instances of the aforementioned abstract model have been introduced and studied. Among them, the majority model has attracted considerable attention, cf.~\citep{auletta2015minority} and ~\citep{gartner2020threshold}. In the \emph{majority model}, for a graph $G$ and an initial coloring, in each round all nodes simultaneously update their color to the most frequent color in their neighborhood. In case of a tie, a node keeps its current color. We also consider the \emph{($\psi_1$, $\psi_2$)-majority model}, for some $\psi_1, \psi_2>1/2$. Here, a black (resp. white) node changes its color if at least $\psi_1$ (resp. $\psi_2$) fraction of its neighbors hold the opposite color from itself. We observe that this is the same as the majority model for $\psi_1=\psi_2=\psi$ for a $\psi$ slightly larger than 1/2.

Several different variants of the majority model have been studied by prior work (cf.~\cite{keller2014even} and ~\cite{gartner2018majority}).
We will focus on the following two variants. Assume that each node $v$ has an \emph{influence factor} $r(v)$. Here also each node chooses the majority color, but it counts the color of a neighbor $v$, $r(v)$ times. By default we assume that $r(v)=1$ for each node $v$, otherwise it is mentioned explicitly. Note that if all nodes have influence factor one, we recover the majority model. Secondly, we consider the variant in which we assign a \emph{stubbornness factor} $\gamma(v)\in (0,1)$ to each node $v$. Then, a node $v$ changes its color if at least $\gamma(v)$ fraction of its neighbors have the opposite color. We observe that if we assign a fixed stubbornness factor $\gamma$ to all nodes, this would coincide with the $(\psi_1, \psi_2)$-majority model for $\psi_1=\psi_2=\gamma$. 

All the updating rules that we study in this paper are deterministic. Furthermore, for an $n$-node graph $G$, there are $2^n$ possible colorings. Therefore, after at most $2^n$ rounds the process reaches a cycle of colorings. The number of rounds the process needs to reach the cycle and the length of the cycle are called the \emph{stabilization time} and \emph{period} of the process. We say a color (black or white) \emph{wins} if more than half of the nodes share that color in the final configuration. Moreover, a color \emph{takes over} if all nodes share that color in the final configuration. We say a set of nodes form a white/black \emph{coalition} if they are all white/black. A node set $S$ is said to be a \emph{winning set} (resp. \emph{dynamo}) if black color wins (resp. takes over) once nodes in $S$ form a black coalition.

\subsection{Our contribution}
\paragraph{Question 1.}\textit{ What is the number of black nodes required for the black color to win or take over?}

Alternatively, more realistically speaking, if a marketing campaign can convince a group of individuals to adopt a new product, 
and the goal is to trigger a large cascade of further adoptions building on collective decision-making, which set of individuals should it target and how large this set needs to be? When considering graphs of real world Social Networks (SNs), ~\cite{MAIN_PELEG}
experimentally observed that in such graphs, if a small set of nodes (e.g. $1\%$ of nodes) with the highest degrees form a black coalition and have an influence factor slightly larger than the rest of nodes, the black color wins. Such a small set of high-degree nodes are meant to approximate the \emph{elites}, which are a relatively small and well-connected set of individuals (i.e., nodes) with substantial economic or social power, cf.~\cite{avin2017elites}. 

Several random graph models have been introduced to simulate the real world SNs. Arguably, one of the most well-studied such graph models is the Preferential Attachment (PA) random graph~\cite{barabasi_PA_graph}. In contrast to the real world SNs,~\cite{MAIN_PELEG} observed that in the PA graphs, with comparable number of nodes and edges, a small set of black nodes is not capable of enforcing the victory of the black color, unless they have extremely large influence factors. Therefore, they suggested for future work to propose graph models which not only retain well known characteristics of the real world SNs, but also support the existence of a small set of nodes with a significant disproportionate power in the majority model.

The fact is even though the PA model possesses some crucial features of the real world SNs, it also suffers from lack of some fundamental properties such as a high clustering coefficient (that is, two neighbors of a node are likely to be adjacent), cf.~\cite{krioukov2010hyperbolic}. Hence, the aforementioned result by~\cite{MAIN_PELEG} is just another indication that the PA model is not a decent choice to represent the SNs. On the other hand, Hyperbolic Random Graph (HRG) cf.~\cite{krioukov2010hyperbolic} is known to possess all the desired fundamental properties, including a high clustering coefficient, making it an interesting and suitable model to consider, leading to our first contribution.

\paragraph{Contribution 1.} Our experiments on HRG matches the results for the real world SNs; that is, almost the same number of highest degree black nodes with a rather small influence factor suffices for the black color to win. (The parameters of the HRG selected suitably to make it comparable to the respective SN. See Section~\ref{experimental setup}, for more details.)

Given this disproportionate amount of controlling power of the elites, a natural question that arises is whether one can develop a counter measure to overpower them. A proposed countermeasure ideally should not require significant changes in the graph structure or the updating rule. Furthermore, it should be easy for the agents to implement, e.g. it should not require them to have a full knowledge of the graph structure or memorize the history of the process.

\paragraph{Contribution 2.} We propose two countermeasures and support their effectiveness both experimentally and theoretically. Firstly, we show that if we require every agent to make a certain number of new connections at random, a small set of elite nodes cannot control the output of the majority model anymore. More formally, we prove that if we add a graph with strong expansion properties on top of any graph, including a real world SN, for the black color to win a significant number of nodes must be black initially. Secondly, we show that if each agent changes her color only if a ``sufficiently" large fraction of her neighbors hold a different color, no small set of nodes can determine the output of the process. This is realized by assigning a high stubbornness factor to each node. Furthermore, we demonstrate that our experiments support these theoretical findings. 

\paragraph{Question 2.} \textit{What is the probability that the black color wins (or takes over) if we assume that initially each node is colored black independently with some probability $p_b\in (0,1)$?}

It is known (cf.~\cite{ZEHMAKAN_opinionformingEREXPAND}) that in the majority model on ``nearly" regular graphs with strong expansion properties, such as random regular graphs and Erd\H{o}s-R\'{e}nyi random graph (See Section~\ref{GraphDef} for a formal definition), if $p_b$ is ``slightly" more (resp. less) than 1/2, then black (resp. white) color takes over asymptotically almost surely. (We say that an event occurs asymptotically almost surely (a.a.s.) if it happens with probability tending to 1 when we let the number of nodes go to infinity.)

\paragraph{Contribution 3.} We provide experimental results confirming the existence of this threshold behavior in the Erd\H{o}s-R\'{e}nyi and regular random graphs, and show that this behavior is also observed in the PA random graph. However, the majority model turns out to exhibit a different behavior on real world SNs and HRG. That is, the black (resp. white) color might not \emph{take over} even when $p_b$ (resp. $1 - p_b$) is significantly larger than 1/2. However, we show that upon the addition of a $d$-regular random graph for a reasonably large $d$ we recover the aforementioned threshold behavior. 

We study the above question for the $(\psi_1, \psi_2)$-majority model as well. We prove that for a dense Erd\H{o}s-R\'{e}nyi random graph, the process exhibits a threshold behavior with two phase transitions: (i) the white color \emph{takes over} if $p_b < 1- \psi_1$ (ii) both colors will \emph{survive} (i.e., no color takes over) if $1-\psi_1 < p_b < \psi_2$ (iii) the black color \emph{takes over} if $\psi_2< p_b$ a.a.s. Furthermore, our experiments suggest that such a threshold behavior is also present in the PA random graph, sparse regular random graphs, HRG, and the real world SNs, but for different threshold values.

As mentioned, the majority model on the Erd\H{o}s-R\'{e}nyi random graph $\mathcal{G}_{n,q}$ is well understood when $p_b$ is smaller or larger than $1/2$. What if we have $p_b=1/2$?~\cite{CONJECTURE_convergence_benjamini} conjectured if $q$ is ``sufficiently'' larger than $1/n$, then a.a.s. one of the two colors \emph{almost takes over} (i.e., all nodes share the same color at the end, except a sub-linear number of them).~\cite{fountoulakis2020resolution} proved that the conjecture is true when $q$ is larger than $1/\sqrt{n}$, but it has remained open for $q$ smaller than $1/\sqrt{n}$.

\paragraph{Contribution 4.} We perform experiments whose results support this conjecture. More precisely, we observe that in the majority model on $\mathcal{G}_{n,q}$ with $p_b=1/2$ if $q=c/n$ for $c \geq 12$, one of the two colors almost takes over. However, if $c \leq 8$, the process reaches a configuration where almost half of the nodes are black.

\paragraph{Question 3.}\textit{ What is the stabilization time and period of the process?}

For the majority model on a graph $G=(V,E)$,~\cite{GOLES1980187} proved that the period is always one or two. Furthermore using some algebraic tools,~\cite{poljak1986pre} showed that the stabilization time is in $\mathcal{O}(|E|)$, which ~\cite{KELLER_frischknecht2013convergence} proved to be tight, up to some poly-logarithmic factor. However, if we start from a random coloring, where each node is black independently with probability $p_b$, the probability that an extremal coloring, for which the process takes a long time, emerges is fairly small. Hence, a natural question that arises here is whether one can provide stronger bounds on the \emph{expected stabilization time} (i.e., the expected number of rounds the process needs to reach a cycle of colorings from a random initial coloring). It is widely believed that a poly-logarithmic upper bound must exist, but this is proven only for some special classes of graphs, cf.~\cite{ZEHMAKAN_opinionformingEREXPAND}. 

\paragraph{Contribution 5.} As our first evidence for a poly-logarithmic upper bound, we prove that the expected stabilization time of the majority model is at most $\log n$ when the underlying graph is a cycle $C_n$. Furthermore, we experimentally investigate the expected stabilization time of the majority model on different random graph models and real world SNs and our findings support the conjectured poly-logarithmic bound. It is worth to stress that the process takes the longest at the threshold value $p_b=1/2$.

We study the stabilization time of the ($\psi_1$, $\psi_2$)-majority model too and, building on a potential function argument, prove that it is also bounded by $\mathcal{O}(|E|)$ when $\psi_1=\psi_2$. For the proof, we set a connection between the number of edges whose endpoints have opposite colors in the initial coloring and the number of rounds the process needs to end. As we will explain, this technique might be useful to prove a poly-logarithmic bound on the expected stabilization of the majority and ($\psi_1$, $\psi_2$)-majority model.

\subsection{Preliminaries}
\label{Preliminaries}

\subsubsection{Graph Definitions.}
\label{GraphDef}
Let $G=\left(V,E\right)$ be an $n$-node graph.  
For a node $v\in V$, $N\left(v\right):=\{u\in V: \{u,v\} \in E\}$ is the \emph{neighborhood} of $v$. For a set $S\subset V$, we define $N\left(S\right):=\bigcup_{v\in S}N\left(v\right)$ and $N_S\left(v\right):=N\left(v\right)\cap S$. Moreover, $d\left(v\right):=|N\left(v\right)|$ is the \emph{degree} of $v$ in $G$ and $d_S\left(v\right):=|N_S\left(v\right)|$. Furthermore, for two node sets $S$ and $S'$, we define $e\left(S,S'\right):=|\{\left(v,u\right)\in S\times S': \{v,u\}\in E\}|$
where $S\times S'$ is the Cartesian product of $S$ and $S'$. Note that whenever graph $G$ is not clear from the context, we add a superscript, e.g. we write $d^G(v)$, $d^G_S(v)$, and $e^G(S,S')$.

\paragraph{Random Graphs.} Let $\mathcal{G}_{n,q}$ denote the Erd\H{o}s-R\'{e}nyi random graph, which is the
random graph on the set $\{1,\cdots, n\}$, where each edge is present independently with probability $q$. We denote by $\mathcal{G}_{n,d}$ the $d$-regular random graph, which is the random graph with a uniform distribution over all $d$-regular graphs on $n$ nodes.

\subsubsection{Models.} For a graph $G=\left(V,E\right)$, a \emph{coloring} is a function $\mathcal{C}:V\rightarrow\{b,w\}$, where $b$ and $w$ represent black and white, respectively. For a node $v\in V$, the set $N_a^{\mathcal{C}}\left(v\right):=\{u\in N\left(v\right):\mathcal{C}\left(u\right)=a\}$ includes the neighbors of $v$ which have color $a\in\{b,w\}$ in coloring $\mathcal{C}$. 
Assume that we are given an initial coloring $\mathcal{C}_0$ on a graph $G$. In a model $M$,  $\mathcal{C}_t\left(v\right)$, which is the color of node $v$ in the $t$-th coloring for $t\in\mathbb{N}$, is determined based on a predefined updating rule. We are mainly interested in the following two models, where $\mathcal{C}_t\left(v\right)$ is defined by a deterministic updating rule as a function of $\mathcal{C}_{t-1}\left(u\right)$ for $u\in N\left(v\right)\cup \{v\}$.

\paragraph{Majority Model.} In the majority model

$\mathcal{C}_t(v)$ = 
$\begin{cases} 
\mathcal{C}_{t-1}(v) \quad if
|N_b^{\mathcal{C}_t-1}(v)| = |N_w^{\mathcal{C}_{t-1}}(v)|\\
argmax_{a \in \{b, w\}}|N_a^{\mathcal{C}_{t-1}}(v)| \quad otherwise\\
\end{cases}.$

\paragraph{($\psi_1$, $\psi_2$)-Majority Model.}
In the $(\psi_1, \psi_2)$-majority model for some $\psi_1,\psi_2\in (1/2,1]$

$\mathcal{C}_t(v)$ =
$\begin{cases}
w & if \mathcal{C}_{t-1}(v)=b \wedge |N_w^{\mathcal{C}_{t-1}}(v)| \ge \psi_1d(v)\\
b & if \mathcal{C}_{t-1}(v)=w \wedge |N_b^{\mathcal{C}_{t-1}}(v)| \ge \psi_2d(v)\\
\mathcal{C}_{t-1}(v) & otherwise
\end{cases}.$

In these models, we define $B_t$ and $W_t$ for $t\in \mathbb{N}_0$ to be the set of black and white nodes in $\mathcal{C}_t$. 

\subsubsection{Experimental Setup.}
\label{experimental setup} We run our experiments for the graph data of the Facebook (FB) and YouTube (YT) SN from~\cite{viswanath-2009-activity_FBMISLOVE}
and \cite{mislove-2007-socialnetworks} and Twitter (TW) and Slashdot (SD) graph data from~\cite{snapnets}, of which we give a short description, as well as some basic graph properties of these data sets summarized Table 1. We should emphasize that for all these datasets, in the case they contained directed edges, we converted them into undirected graphs by simply treating directed edges as undirected.

\textbf{YouTube (YT):} A video sharing website on which people can follow (subscribe to) other users. An edge in the social network graph represents a following relationship.\\
\textbf{Slashdot (SD):} A social news website with primarily technology-related news. In 2002, Slashdot introduced the Slashdot   
Zoo feature which allows users to tag each other as friends or foes. The network contains friend/foe links between the users of Slashdot. \\
\textbf{Twitter (TW):} A microblogging and social networking service. Here users post and interact with messages called ``tweets". A directed edge in the graph represents a following relationship, i.e. an edge from node $v$ to node $u$ implies that user $v$ follows user $u$. \\
\textbf{Facebook (FB):} A social media and social networking service. The graph consists of all user-to-user links from the Facebook New Orleans networks. An edge in the graph between node $v$ and node $u$ hence indicates that user $v$ and user $u$ are friends.

\begin{table}[!ht]
\label{table social nets}
\centering
\begin{tabular}{lrrr} 
\midrule 
SN name & n & m & ave. deg.\\ 
\midrule 
YT & 1138499 & 2990443 & 5.25 \\
SD & 82168 & 582533 & 14.18 \\
TW & 81306 & 1342310 & 33.02\\ 
FB & 63731 & 817090 & 25.64\\
\bottomrule
\end{tabular}
\vspace{2mm}
\caption{Some basic graph properties of the four examined SNs, where $n$ and $m$ denote the number of nodes and edges, respectively.}
\end{table}

Furthermore, we focus on several random graph models such as Erd\H{o}s-R\'{e}nyi (ER) Random Graph, Random Regular Graph (RRG), Preferential Attachment (PA) Random Graph, and Hyperbolic Random Graph (HRG). To make our experiments on the random graph models and real world SNs comparable, we set the parameters of random graphs in a way that they have the same number of nodes and edges in expectation. For the generation of the RRG and HRG, we rely on the (approximation) algorithms of ~\cite{steger1999generating} and ~\cite{staudt2015networkit} respectively, and the implementations in \cite{networkx}. To generate HRG, in addition to the number of nodes and edges, one needs to provide the exponent of the power-law degree distribution $\beta$ and the temperature $T$ as the input parameters. Throughout this paper, we set $\beta=2.5$ and $T=0.6$. Experiments which required random choice of edges or colors were executed 8 times and then the average output was considered. Furthermore, (several) experiments were carried out on an Intel Xeon E3 CPU, with 32 GB RAM, and a Linux OS.

\subsection{Prior Works}

\paragraph{Opinion Diffusion Models.} 
In the plethora of opinion diffusion models, 
considerable attention has been devoted to the study of different variants of the majority model, such as an asynchronous updating rule~\cite{DevilDetails} and~\cite{zehmakan2019tight}, various tie-breaking rules~\cite{schoenebeck2018consensus} and ~\cite{jeger2019dynamic}, and randomized updating rules~\cite{Mossel_2013} and~\cite{n2020rumor}. Even more complex models such as the one considered in~\cite{ferraioli2017social} 
which follows an averaging-based updating rule, or the models in~\cite{faliszewski2020opinion} and~\cite{brill2016pairwise} can be seen as extensions of the majority model. In the present paper, we consider variants of the majority model previously studied by cf. ~\cite{MAIN_PELEG} and ~\cite{Stubborn_nodes_Auletta}. Furthermore, the ($\psi_1$, $\psi_2$)-majority model, which is a generalization of the model studied in~\cite{balister2010random} and ~\cite{balogh2009majority}.

\paragraph{Minimum Size of a Winning Set.} Determining the minimum size of a winning set and a dynamo has been considered in various majority based models and for different classes of graphs such as ER~\cite{CONJECTURE_convergence_benjamini}, PA~\cite{MAIN_PELEG}, and lattice~\cite{balister2010random},~\cite{balogh2012sharp}, and~\cite{gartner2017color}. For general graphs,~\cite{Berger_CONSTANTSIZE_MONOP} proved that there exist arbitrarily large graphs which have dynamos
of constant size under the majority model and it was shown in~\cite{auletta2018reasoning} that every $n$-node graph has a dynamo of size at most $n/2$ under the asynchronous variant.

\paragraph{Random Initial Coloring.} The majority model with a random initial coloring has been investigated for different classes of graphs such as hypercubes and preferential attachment trees, cf.~\cite{balister2010random} and ~\cite{zehmakan2019two}. As stated, special attention has been devoted to the study of ER when each node is black independently with probability $p_b=1/2$, cf.~\cite{CONJECTURE_convergence_benjamini} and~\cite{shimizu2020quasi}. 

\paragraph{Stabilization Time and Period.} \cite{GOLES1980187} proved that the period of the majority model is always one or two. Recently, it was shown by~\cite{chistikov2020convergence} that it is PSPACE-complete to decide whether the period is one or not for a given coloring of a \emph{directed} graph. Furthermore, \cite{poljak1986pre} proved that the stabilization time of the majority model on a graph $G=(V,E)$ is upper-bounded by $\mathcal{O}(|E|)$. Stronger bounds are known for special classes of graphs. For instance, for a $d$-regular graph with strong expansion properties the stabilization time is in $\mathcal{O}(\log_d n)$, cf.~\cite{ZEHMAKAN_opinionformingEREXPAND}. 

\section{Power of Elites and Countermeasures}
\label{Power-of-elites}

\cite{MAIN_PELEG} observed that in real world SNs, if a small set (e.g. $1\%$) of the elite nodes are provided with a constant influence factor (e.g. 8), they are capable of determining the outcome of the majority model, i.e., they form a winning set. In the PA random graphs with comparable parameters, in contrast, these authors showed that for a small set of elite nodes to form a winning set, they must have an extremely large influence factor. As mentioned earlier, the PA random graph lacks the presence of a high clustering coefficient cf.~\cite{krioukov2010hyperbolic}. We believe this is the source of such discrepancy.
On the other hand, HRG is known~\cite{krioukov2010hyperbolic} to possess all the aforementioned properties, justifying our choice to investigate the majority model on HRG.

Our experiments demonstrate that the size and influence factor required for a set of elites to form a winning set is approximately the same in the real world SNs and HRGs with comparable parameters. This is depicted for YT SN in Figure~\ref{fig:elite12} (left), and other SNs are included in the full version of this paper. Figure~\ref{fig:elite12} (left) also covers PA, which was already investigated by~\cite{MAIN_PELEG}. 

Naturally, the question arises how to prevent a small set of elite nodes from determining the outcome of the majority model. To this end, we propose two countermeasures to overpower the elites. Firstly, we prove that if we add a sufficiently dense RRG on top of any graph, in particular a SN, no small winning set will exist. Secondly, we show that if we assign a sufficiently large stubbornness factor to each node, no small set of elites can create a winning set. We support both countermeasures theoretically as well as experimentally. 

\paragraph{Countermeasure 1.} Adding a RRG on top of a SN is essentially the same as asking agents (nodes) to make a set of connections at random. Consider a small set $S$ of elite nodes who form a winning set in a SN. Intuitively speaking, the randomly added connections for each node are unlikely to be chosen from set $S$, thus reducing the influencing power of the elite nodes in $S$. Before stating our result formally in Theorem~\ref{counter measure 1}, let us present Lemma~\ref{expander mixing lemma}, which will be the main workhorse of the proof. For a graph $G$, let $\sigma(G)$ be the second-largest absolute eigenvalue of its normalized adjacency matrix, which is an algebraic measure of expansion. Loosely speaking, Lemma~\ref{expander mixing lemma} asserts that the number of edges between any two node sets is almost completely determined by their cardinality, given that $\sigma(G)$ is small (i.e., $G$ has strong expansion properties).

\begin{lemma}[\cite{friedman2003proof_ALONEIGVALUE}]
\label{expander mixing lemma}
For any two node sets $S,S'$ in a $d$-regular graph, $\left|e(S,S') - \frac{|S||S'|d}{n}\right| \leq \sigma d \sqrt{|S||S'|}$.
\end{lemma}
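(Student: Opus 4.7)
The plan is to attack this via spectral decomposition of the normalized adjacency matrix $M:=A/d$, where $A$ is the adjacency matrix of the given $d$-regular graph. Since $G$ is $d$-regular, the all-ones vector $\mathbf{1}$ satisfies $M\mathbf{1}=\mathbf{1}$, and because $M$ is symmetric it admits an orthonormal eigenbasis with eigenvalues $1=\lambda_1\geq \lambda_2\geq \cdots\geq \lambda_n\geq -1$ whose top eigenvector can be taken to be $\mathbf{1}/\sqrt{n}$; by definition $\sigma=\max_{i\geq 2}|\lambda_i|$, so the operator norm of $M$ restricted to the hyperplane $\mathbf{1}^{\perp}$ equals $\sigma$.

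First, I would translate $e(S,S')$ into matrix language. From the definition $e(S,S')=|\{(v,u)\in S\times S':\{v,u\}\in E\}|$ together with the symmetry of $A$, one checks that $e(S,S')=\mathbf{1}_S^{\top}A\,\mathbf{1}_{S'}=d\cdot\mathbf{1}_S^{\top}M\,\mathbf{1}_{S'}$, where $\mathbf{1}_S$ is the indicator vector of $S$. Next I decompose each indicator vector against the top eigendirection: write $\mathbf{1}_S=\alpha\mathbf{1}+\mathbf{x}$ and $\mathbf{1}_{S'}=\alpha'\mathbf{1}+\mathbf{x}'$ with $\mathbf{x},\mathbf{x}'\in\mathbf{1}^{\perp}$; taking the inner product with $\mathbf{1}$ forces $\alpha=|S|/n$ and $\alpha'=|S'|/n$.

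The main computation is then to expand $\mathbf{1}_S^{\top}M\,\mathbf{1}_{S'}$ using this decomposition. Because $M\mathbf{1}=\mathbf{1}$ and $\mathbf{x},\mathbf{x}'$ are orthogonal to $\mathbf{1}$, the cross terms vanish and one obtains
\[
\mathbf{1}_S^{\top}M\,\mathbf{1}_{S'}=\alpha\alpha' n+\mathbf{x}^{\top}M\,\mathbf{x}'=\frac{|S||S'|}{n}+\mathbf{x}^{\top}M\,\mathbf{x}'.
\]
Multiplying by $d$ isolates the deterministic ``main term'' $|S||S'|d/n$ and leaves an error of $d\,\mathbf{x}^{\top}M\,\mathbf{x}'$ to be controlled.

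To bound the error, I would invoke Cauchy-Schwarz and the spectral bound on $M|_{\mathbf{1}^{\perp}}$, giving $|\mathbf{x}^{\top}M\,\mathbf{x}'|\leq \sigma\,\|\mathbf{x}\|\,\|\mathbf{x}'\|$. Finally, Pythagoras yields $\|\mathbf{x}\|^2=\|\mathbf{1}_S\|^2-\alpha^2 n=|S|-|S|^2/n\leq |S|$ and symmetrically $\|\mathbf{x}'\|^2\leq |S'|$. Chaining these estimates produces exactly $|e(S,S')-|S||S'|d/n|\leq d\sigma\sqrt{|S||S'|}$. The only subtle point is recognizing that peeling off the $\mathbf{1}/\sqrt{n}$ component is precisely what surfaces the regularity-driven main term; the rest is routine linear algebra, so I do not expect any real obstacle. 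The lemma is classical and the argument is essentially forced once one commits to the spectral decomposition.
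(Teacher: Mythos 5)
Your argument is correct, and it is the classical proof of the expander mixing lemma (going back to Alon and Chung). Note, however, that the paper does not prove this lemma at all: it is imported as a known result with a citation, so there is no in-paper proof to compare against. Your write-up is the standard and essentially only natural route --- decompose the indicator vectors along $\mathbf{1}$ and its orthogonal complement, use $M\mathbf{1}=\mathbf{1}$ to kill the cross terms and extract the main term $|S||S'|d/n$, and bound the remainder by $\sigma\|\mathbf{x}\|\|\mathbf{x}'\|$ via the spectral norm of $M$ on $\mathbf{1}^{\perp}$ together with $\|\mathbf{x}\|^2=|S|-|S|^2/n\le|S|$. All steps check out, including for overlapping $S,S'$, since the paper's $e(S,S')$ counts ordered pairs and therefore equals $\mathbf{1}_S^{\top}A\,\mathbf{1}_{S'}$ exactly as you use it. The only cosmetic remark is that one should say explicitly that $\mathbf{1}^{\perp}$ is $M$-invariant (immediate from symmetry of $M$ and $M\mathbf{1}=\mathbf{1}$) before quoting the operator-norm bound $\sigma$ there; you implicitly do this when you set up the eigenbasis.
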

  
\setcounter{theorem}{1}  
\begin{theorem}
\label{counter measure 1}
Let $G_1=(V,E_1)$ be an arbitrary graph with average degree $\bar{d}$, $G_2=(V,E_2)$ be a $d$-regular graph and $Z \subset V$ be an arbitrary set of nodes in $G = (V, E_1 \cup E_2)$ with $|Z|= 0.05n$ and $n=|V|$.
Consider the majority model on $G$, where $B_0=Z$ and all nodes in $Z$ have influence factor $r\leq 10$ (while it is 1 for the rest of nodes). If $\sigma(G_2) \leq \beta$ and $d=cr\bar{d}$ for a suitable choice of constants $c,\beta>0$, the white color wins.
\end{theorem}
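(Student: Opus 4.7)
The plan is to show that $|B_t|<n/2$ for every round $t\geq 1$. Since the majority model has period at most two, this forces white to have a majority in every configuration of the eventual cycle, hence white wins. The heart of the argument is a one-round bound derived from Lemma~\ref{expander mixing lemma} applied to $G_2$, which then propagates by induction.

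For round~1, let $F:=B_1\setminus Z$ denote the initially-white vertices that flip to black. Each $v\in F$ satisfies $r\,d_Z(v)>d(v)-d_Z(v)$, so
\[
d_Z(v) > \tfrac{d(v)}{r+1} \geq \tfrac{d}{r+1},
\]
using that $v$ has at least $d$ neighbours coming from $G_2$. Summing over $F$,
\[
\tfrac{|F|\,d}{r+1} \;\leq\; \sum_{v\in F} d_Z(v) \;=\; e^{G_1}(F,Z)+e^{G_2}(F,Z).
\]
The first term is at most $2|E_1|=n\bar d$. The second is controlled by Lemma~\ref{expander mixing lemma} with $|Z|=0.05n$, giving $e^{G_2}(F,Z) \leq 0.05\,|F|\,d + \beta d\sqrt{0.05\,|F|\,n}$. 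Plugging $d=cr\bar d$ and dividing through by $d$ yields
\[
\Bigl(\tfrac{1}{r+1}-0.05\Bigr)|F| \;\leq\; \tfrac{n}{cr} + \beta\sqrt{0.05\,|F|\,n}.
\]
For $r\leq 10$ the left coefficient exceeds $\tfrac{1}{11}-\tfrac{1}{20}>0.04$, so choosing $c$ large enough to absorb the $n/(cr)$ slack coming from $G_1$ and $\beta$ small enough to absorb the expansion error forces $|F|<0.45n$. Together with $|Z|=0.05n$ this gives $|B_1|<0.5n$.

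For $t\geq 2$ I would proceed by induction. If $|B_t|<n/2$, a vertex $v\notin Z$ lies in $B_{t+1}$ only when $r\,d_{Z\cap B_t}(v)+d_{B_t\setminus Z}(v)$ reaches the symmetric white count, which after using $d_{Z\cap B_t}(v)\leq d_Z(v)$ gives $2\,d_{B_t\setminus Z}(v)\geq d(v)-(r+1)d_Z(v)$. Splitting $V\setminus Z$ into the ``contaminated'' set $H:=\{v:d_Z(v)\geq d(v)/(r+1)\}$, whose size is already bounded by $0.45n$ from the round-1 calculation, and its complement, and reapplying Lemma~\ref{expander mixing lemma} to the pair $(B_{t+1}\setminus Z,\,B_t\setminus Z)$ in $G_2$, the same inequality propagates. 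The main obstacle is making this induction tight while tracking how vertices inside $Z$ may themselves flip between rounds and while simultaneously controlling both the $G_1$ contribution (of order $1/(cr)$) and the expansion slack (of order $\beta$); the constants $c$ and $\beta$ must be chosen uniformly so that both error terms stay strictly under the spare capacity $\tfrac{1}{r+1}-0.05$ at every round.
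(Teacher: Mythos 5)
Your round-1 computation is sound, and it is essentially the paper's estimate specialized to the first step (with the minor, legitimate difference that you bound the total $G_1$-contribution in aggregate by $2|E_1|=n\bar d$, whereas the paper first discards the at most $0.1n$ nodes of $G_1$-degree above $10\bar d$ and then bounds the contribution per node). The genuine gap is the induction for $t\ge 2$. The invariant you propose to propagate, $|B_t|<n/2$ (equivalently $|B_t\setminus Z|\le 0.45n$), is too weak to feed back into Lemma~\ref{expander mixing lemma}. A node $v\notin Z$ joins $B_{t+1}$ only if $d^{G_2}_{W_t}(v) < r\,d_Z(v)+d_{B_t\setminus Z}(v)$; summing this over $B_{t+1}\setminus Z$ and applying the mixing lemma produces an inequality whose key coefficient is $|W_t|-r|Z|-|B_t\setminus Z|$ (up to the $G_1$ and expansion error terms). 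With only $|W_t|>0.5n$, $r|Z|\le 0.5n$ and $|B_t\setminus Z|\le 0.45n$, this coefficient is at most $0.5n-0.5n-0\cdot n\le 0$ in the worst case for $r=10$, and even for $r=1$ it is at best $0.5n-0.05n-0.45n=0$. The resulting bound on $|B_{t+1}\setminus Z|$ is therefore vacuous: the ``spare capacity'' that was $\frac{1}{r+1}-0.05>0.04$ at round~1 disappears entirely at round~2. Your closing sentence names this obstacle but does not overcome it, and the ``contaminated set'' $H$ does not rescue the argument, because for $t\ge 2$ nodes with few $Z$-neighbours can turn black purely on account of black neighbours in $B_t\setminus Z$.

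The repair --- and what the paper actually does --- is to carry a much stronger invariant: a fixed set $L$ of at least $0.85n$ nodes (those outside $Z$ with $G_1$-degree at most $10\bar d$) keeps at least $0.75n$ white members in every round. Against a white side of size at least $0.75n$, the black side contributes at most $r|Z|+|Y|\le 0.5n+0.2n$, leaving a positive slack of $0.05-10/c$ that survives the mixing-lemma error for $c$ large and $\beta$ small; hence the newly black part of $L$ has size at most $0.1n$ and the invariant propagates. Your own round-1 estimate can in fact deliver such a strengthened statement if you push the constants (it gives $|F|\le \frac{25n}{cr}+25\beta\sqrt{0.05}\,n$, which can be made, say, at most $0.05n$), but you must then restate the induction hypothesis in that stronger form and redo the mixing-lemma step against it. As written, the induction does not close.
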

  
\begin{proof}
Let us define $L:= \{v \in V\setminus Z : d^{G_1}(v) \leq 10\bar{d}\}$ and $H:= V\setminus \{ L \cup Z\}$. We observe that $\left|H\right| \leq 0.1n$ since $10 \bar{d}\left|H\right| \leq \sum_{v\in H}d^{G_1}(v) \leq \sum_{v\in V}d^{G_1}(v) = n\bar{d}$. Note that $\left|L\right| \geq 0.85n$. We claim that if $|L\cap W_t|\ge 0.75 n$ for some $t\in \mathbb{N}_0$, then $|L\cap W_{t+1}|\ge 0.75 n$. This immediately implies that if $B_0=Z$, then the white color wins. (Note that since $|Z|=0.05 n$, we have $|L\cap W_0|=|L|\ge 0.85 n\ge 0.75 n$.)

Let $L_w:=L\cap W_t$, with $|L_w|\ge 0.75 n$, and $B:=L\cap B_{t+1}$.
To prove our claim, it suffices to show that $\left|B\right| \leq 0.1n$. For a node $v$ in $B$, it is necessary that $d_{L_w}^{G_2} (v)\leq rd_Z^{G_1}(v) + d_{Y}^{G_1}(v) + rd_Z^{G_2}(v) + d_Y^{G_2}(v)$, where $Y := V\setminus \{Z \cup L_w\}$. Since $v \notin H$, $rd_Z^{G_1}(v) + d_Y^{G_1}(v) \leq 10r\bar{d}$. The combination of these two inequalities and a summation over all nodes in $B$ gives us $e^{G_2}(B, L_{w}) \leq re^{G_2}(B,Z) + e^{G_2}(B,Y) + 10r\bar{d}\left|B\right|$. Applying Lemma~\ref{expander mixing lemma} to both sides of the inequality yields
\begin{align*}
  \frac{\left|B\right|\left|L_w\right|d}{n} - \sigma d\sqrt{\left|B\right|\left|L_w\right|} \leq \frac{r\left|B\right|\left|Z\right|d}{n} + r\sigma d\sqrt{\left|B\right|\left|Z\right|}
  + \frac{\left|B\right|\left|Y\right|d}{n} + \sigma d \sqrt{\left|B\right|\left|Y\right|} + 10r\bar{d}\left|B\right.|
\end{align*}
Dividing both sides by $\frac{\sqrt{\left|B\right|}d}{n}$, setting $d = cr\bar{d}$ and rearranging the terms give us $\sqrt{\left|B\right|}\left(\left|L_w\right| - r\left|Z\right| - \left|Y\right| - \frac{10n}{c}\right) \leq
  \sigma n (\sqrt{\left|L_w\right|} + r\sqrt{\left|Z\right|} + \sqrt{\left|Y\right|})$. Recall that $0.75n \leq \left|L_w\right| \leq n$, $\left|Z\right| =0.05n$, $\left|Y\right| \leq 0.2n$ (since $Y=V\setminus \{Z\cup L_w\}$), $\sigma \leq \beta$, and $r \leq 10$. By using these bounds and rearranging, we get

\begin{align*}
  \left|B\right| \leq \frac{\beta^2\left(\sqrt{n} + r\sqrt{0.05n} + \sqrt{0.2n}\right)^2}{(0.75 - 0.05r - 0.2 - 10/c)^2} \leq \frac{12^2\beta^2n}{(0.05 - 10/c)^2}
\end{align*}
Hence, $\left|B\right| \leq 0.1n$ for a sufficiently large constant $c$ and a sufficiently small constant $\beta >0$. It is worth to stress that we did not attempt to minimize constant $c$, as the proof is merely for theoretical purposes. As we will see in our experiments on real world SNs a much smaller choice of $c$, namely $c=2$, suffices. 
\end{proof}

\paragraph{Corollary.}~\cite{friedman2003proof_ALONEIGVALUE} proved that for a random $d$-regular graph $\mathcal{G}_{n,d}$, $\sigma\left(\mathcal{G}_{n,d}\right) \leq 2/ \sqrt{d}$ a.a.s. when $d \geq 3$. This implies that the statement of Theorem~\ref{counter measure 1} holds a.a.s if $G_2=\mathcal{G}_{n,d}$ for a sufficiently large $d$. Therefore, if we add a $\mathcal{G}_{n,d}$ on top of a SN, there is no winning set which includes less than $5\%$ of the nodes. Recall that based on ~\cite{MAIN_PELEG} the real world SNs usually allow winning sets of much smaller size than $5\%$; for example in YT SN, a set of highest degree nodes of size $0.15\%$. As we will discuss, our experiments support even stronger bounds than the one given in Theorem $\ref{counter measure 1}$.
Lastly, it is worth to mention that the constraint $r\le 10$ can be relaxed, but cannot be lifted entirely because if a set $S$ of elites in a SN have extremely large influence factors, even after adding a complete graph on top of the SN, $S$ is a winning set. 

\paragraph{Countermeasure 2.} 
Note that we consider the setting in which initially a set of elites form a black coalition, and the rest of nodes (or most of them) are white. Hence intuitively speaking, if most nodes become very reluctant to change their color (i.e., have a large stubbornness factor), one would expect most of the white nodes to keep their color unchanged. We state this observation more formally in Theorem~\ref{Counter measure 2}.

\begin{theorem}
\label{Counter measure 2}
Consider a graph $G=(V,E)$. Let $Z\subset V$ such that $|Z|< n/2$ and $d_Z(v)\le fd(v)$ for some $f\in (0,1)$ and every $v\in V\setminus Z$. Consider the majority model where all nodes in $Z$ have influence factor $r\in\mathbb{N}$. If for each node $v\in V\setminus Z$ the stubbornness factor $\gamma(v)> \frac{r}{r + \frac{1-f}{f}}$, then $Z$ is not a winning set. 
\end{theorem}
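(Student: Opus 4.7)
The plan is to prove by induction on $t$ that every node of $V\setminus Z$ is white in the coloring $\mathcal{C}_t$. Once this is established for all $t$, the hypothesis $|Z|<n/2$ gives $|W_t|>n/2$ throughout, so the black color never wins and $Z$ is not a winning set. The base case $t=0$ is immediate from $B_0=Z$.

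For the inductive step I would fix an arbitrary $v\in V\setminus Z$ at time $t+1$ and assume that every node in $V\setminus Z$ is white at time $t$. Since $v$ is white at time $t$, it flips only if the weighted fraction of its black neighbors (with neighbors in $Z$ contributing $r$ and the remaining neighbors contributing $1$) reaches the stubbornness threshold $\gamma(v)$. By the inductive hypothesis, every black neighbor of $v$ lies in $Z$, so the weighted black fraction at $v$ is at most
$$\frac{r\,d_Z(v)}{r\,d_Z(v)+(d(v)-d_Z(v))} \;=\; \frac{rx}{rx+1-x}, \qquad x:=\frac{d_Z(v)}{d(v)}\le f.$$
A one-line derivative check shows the right-hand side is strictly increasing in $x$ for $r\ge 1$, so it is bounded above by
$$\frac{rf}{rf+1-f} \;=\; \frac{r}{\,r+(1-f)/f\,},$$
which by hypothesis is strictly less than $\gamma(v)$. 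Hence $v$ does not flip, completing the induction.

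The calculation itself is routine; the only genuine subtlety is fixing the right interpretation of the combined model, namely that when influence factors are present the stubbornness threshold must be applied to the influence-weighted neighbor counts (otherwise the expression $r/(r+(1-f)/f)$ would not be the tight bound). Once that convention is agreed upon, the argument is entirely local: it uses only the per-vertex bound $d_Z(v)\le f\,d(v)$ and never needs any global structure of $G$, which is why the conclusion is remarkably clean.
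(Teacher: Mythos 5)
Your proposal is correct and follows essentially the same route as the paper: an induction showing that all of $V\setminus Z$ stays white forever, driven by the local bound $d_Z(v)\le f\,d(v)$ and the weighted-fraction threshold condition (the paper writes this as $(1-\gamma(v))\,r\,d_Z(v)<\gamma(v)\,d_{V\setminus Z}(v)$, which is algebraically the same inequality you verify via monotonicity in $x=d_Z(v)/d(v)$). Your reading of the combined influence/stubbornness rule also matches the one the paper uses implicitly.
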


\begin{proof}
Assume that initially the nodes in $Z$ form a black coalition and the rest of nodes are white, that is, $B_0=Z$ and $W_0=V\setminus Z$. Note that if all nodes in $V\setminus Z$ are white in some round during the process, and for every node $v\in V\setminus Z$, it holds that
\begin{equation}
\label{equation-stubborn}
\left(1- \gamma\left(v\right)\right)rd_Z(v)< \gamma\left(v\right)d_{V\setminus Z}\left(v\right)
\end{equation}
then, all nodes in $V\setminus Z$ will stay white in the next round. By an inductive argument, we can conclude that the node set $V\setminus Z$ remains white forever. This implies that the black color does not win (i.e., $Z$ is not a winning set).

It remains to prove that for each node $v\in V\setminus Z$, Equation~(\ref{equation-stubborn}) holds. Since $d_Z(v)\le fd(v)$ and $d_{V\setminus Z}(v)\ge (1-f)d(v)$, it suffices to show that
\[
\left(1- \gamma\left(v\right)\right)rfd\left(v\right) <  \gamma\left(v\right)\left(1-f\right)d\left(v\right)
\]
which is true because $\gamma(v)> \frac{r}{r + \frac{\left(1-f\right)}{f}}$.
\end{proof}

Observe that the statement of Theorem~\ref{Counter measure 2}, in particular, is true when $Z$ includes the nodes of highest degree and they all have influence factor $r$ (while it is 1 for the rest of nodes).

\paragraph{Experiments of Countermeasures.} The experimental results of both countermeasures applied to FB SN are depicted in Figure~\ref{fig:elite12}~(right), which confirm their effectiveness. The plots for other SNs are given in Appendix \ref{Sec: Experiment 1}. For instance, our experiments indicate that in FB SN an elite set consisting of $0.4\%$ of nodes with influence factor $16$ form a winning set, but after applying countermeasure $CM_1$ (adding a RRG) and $CM_2$ (assigning high stubbornness factor) an elite set of size $10\%$ and $33\%$, respectively, is required to win, with the same influence factor. 
 
\begin{figure}[h]
\centering
\includegraphics[scale=0.27]{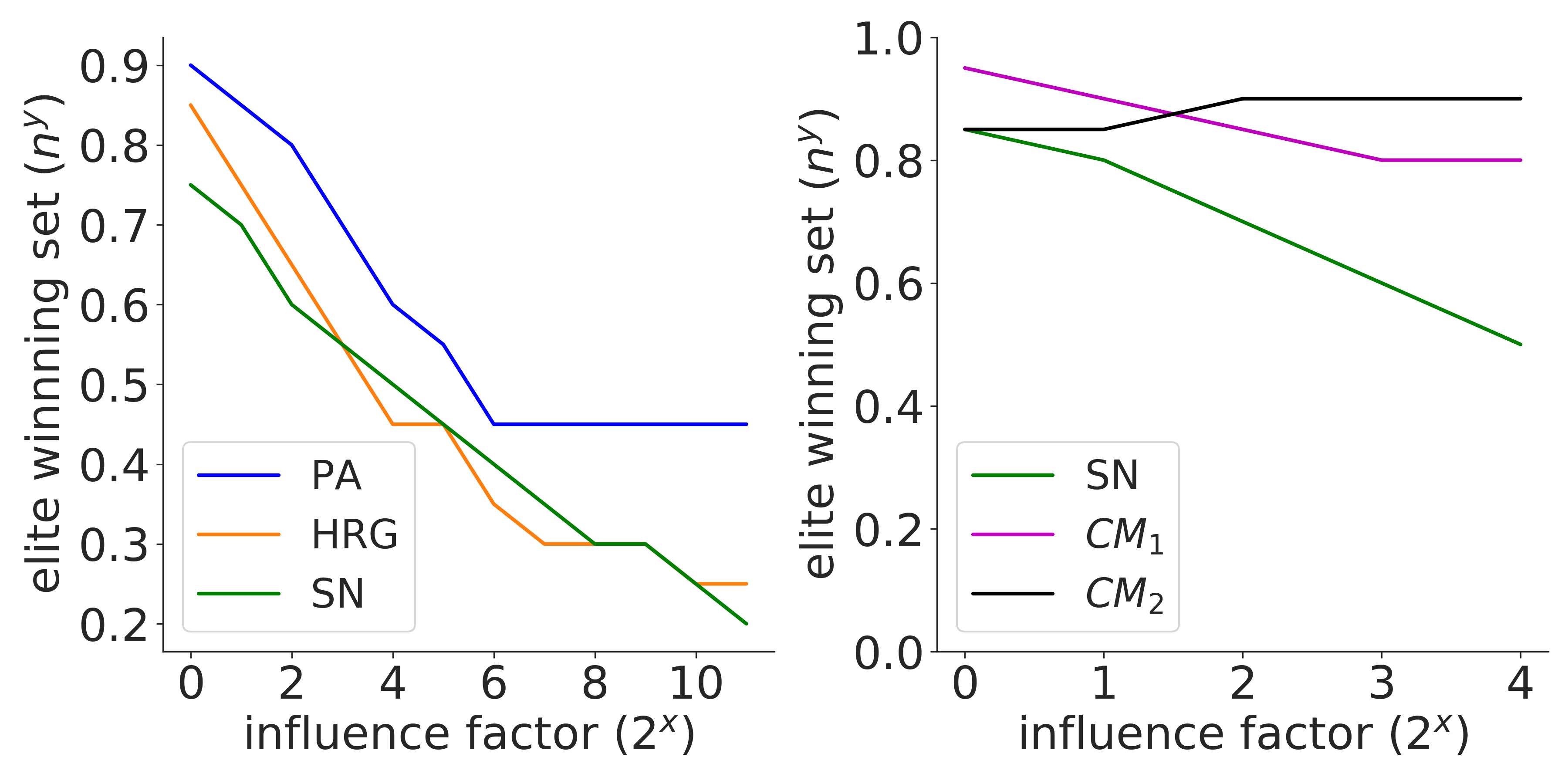}

\caption{The minimum size of a winning set of elite nodes whose influence factor is $r=2^x$ (while it is 1 for the rest of nodes) \textbf{(left)} on YT SN and HRG and PA with comparable parameters, and \textbf{(right)} on FB SN and the graphs $CM_1$ and $CM_2$, corresponding to our countermeasures applied to FB SN. $CM_1$ is the union of FB SN and a RRG with degree $d=2r\bar{d}$ (where $\bar{d}$ is the average degree of FB SN). $CM_{2}$ denotes FB SN in which all nodes get assigned a stubbornness factor $ \gamma = 1 - 1/2r$.}

\label{fig:elite12}
\end{figure}

\section{Random Initial Coloring}

We experimentally investigate the majority and ($\psi_1$, $\psi_2$)-majority model on various real world SNs and random graph models, where initially each node is black independently with probability $p_b$. We aim to determine the \emph{final fraction} of black nodes (i.e., the number of black nodes in the final configuration divided by the number of all nodes) for different values of $p_b$.

\paragraph{Majority Model with Random Coloring.} Our experimental results for the majority model on SD SN and several random graph models with comparable parameters are depicted in Figure~\ref{fig: density12}~(left). The plots for other real world SNs can be found Appendix \ref{Sec: Experiment 2}. We observe that the majority model is a \emph{fair density classifier} on ER, RRG, and PA. That is, for $p_b < 1/2$ (resp. $p_b > 1/2)$, the white (resp. black) color \emph{takes over}. This confirms some prior theoretical results, cf.~\cite{ZEHMAKAN_opinionformingEREXPAND}. However for SD SN and HRG, the white  (resp. black) color might not \emph{take over} even when $p_b$ (resp. $1-p_b$) is much smaller than $1/2$. Lastly, we note that upon the addition of a RRG with the same average degree on top of SD SN, the aforementioned fair density classification behavior emerges. (See $CM_1$ in Figure~\ref{fig: density12}~(left))

\paragraph{Uniform Random Coloring.} In prior work, special attention has been devoted to the study of the majority model on the Erd\H{o}s-R\'{e}nyi random graph $\mathcal{G}_{n,q}$ for $p_b=1/2$. In particular,~\cite{CONJECTURE_convergence_benjamini} conjectured that if $q$ is sufficiently larger (resp. smaller) than $1/n$, the process reaches an almost monochromatic coloring (resp. an almost balanced coloring) a.a.s. In an almost monochromatic coloring, all nodes share the same color, except a sub-linear number of them, and in an almost balanced coloring the difference between the number of black and white nodes is sub-linear. Our experiments on $\mathcal{G}_{n,q}$ for $n=1000000$ indicate that for $q\ge 12/n$ (resp. $q\le 8/n$) the process reaches an almost monochromatic coloring (resp. an almost balanced coloring). Hence, our results support the correctness of the conjecture.

\paragraph{($\psi_1, \psi_2$)-Majority Model with Random Coloring.} We prove that in the ($\psi_1, \psi_2$)-majority model on $\mathcal{G}_{n,q}$ a.a.s., for $q$ sufficiently larger than $\log n/n$ (which is the connectivity threshold): (i) the white color \emph{takes over} if $p_b < 1- \psi_1$ (ii) both colors will \emph{survive} (i.e., no color takes over) if $1-\psi_1 < p_b < \psi_2$ (iii) the black color \emph{takes over} if $\psi_2< p_b$. The proof of this proposition can be found in Appendix \ref{Sec: Two Phase Transition}. Furthermore, we experimentally investigate the $(\psi_1, \psi_2$)-majority model, for certain values of $\psi_1, \psi_2$, with a random initial coloring on various real world SNs and random graph models with comparable parameters. Our results for TW SN and corresponding random graphs are depicted in Figure~\ref{fig: density12}~(right). Similar plots are provided in Appendix \ref{Sec: Experiment 3}. We observe that a similar threshold behavior with two phase transitions is also present in TW SN, PA, HRG, and RRG but the threshold values are different from $1-\psi_1$ and $\psi_2$. 

\begin{figure}[h]
\centering
\includegraphics[scale=0.27]{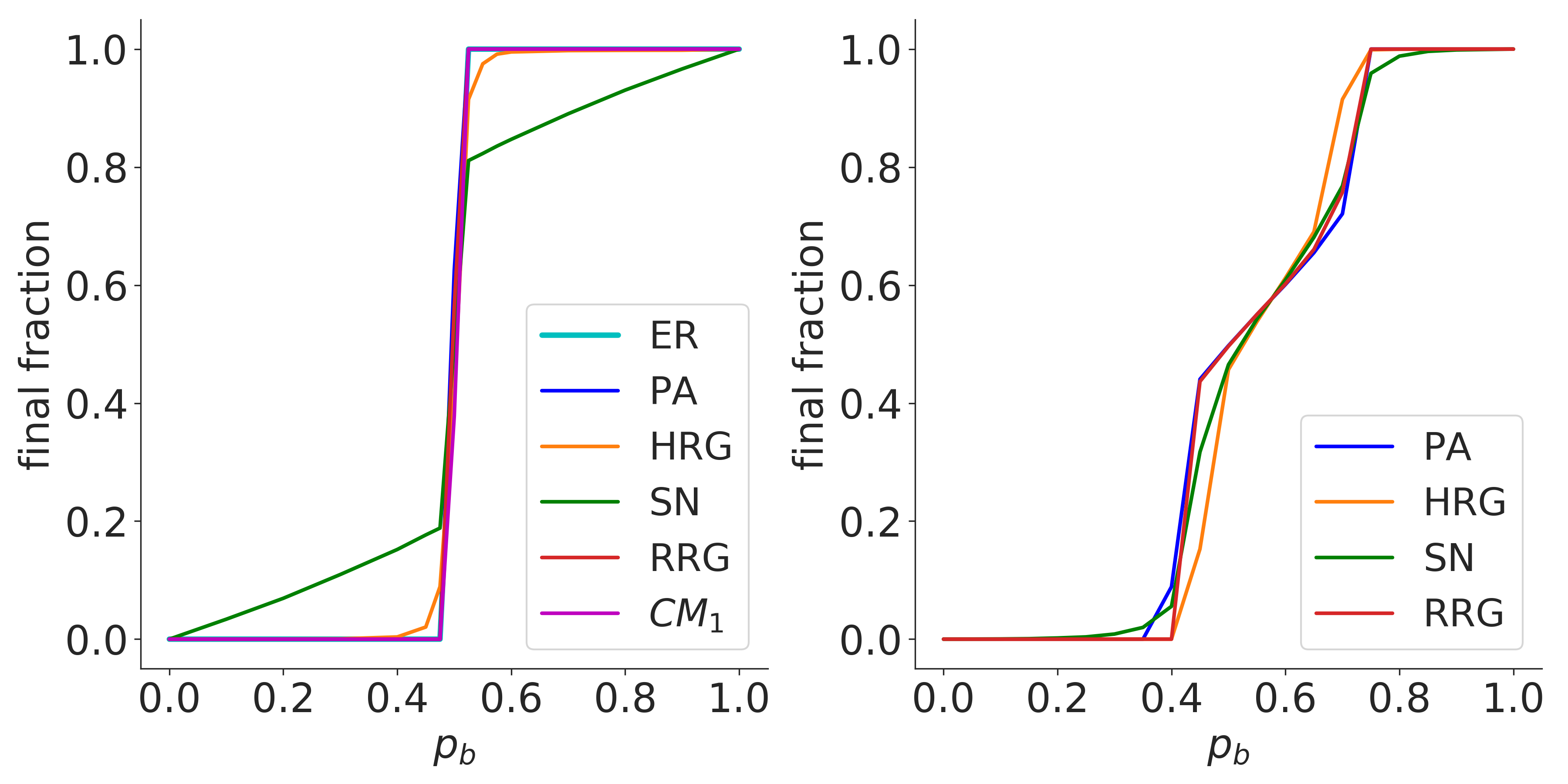}

\caption{The final fraction of black nodes \textbf{(left)} in the majority model with a random initial coloring for different values of $p_b$ on SD SN, several random graphs with comparable parameters and $CM_1$ (which corresponds to the union of SD SN and $\mathcal{G}_{n,d}$ for $d = \bar{d}$, where $\bar{d}$ is the average degree of SD SN), \textbf{(right)} in the $(\psi_1, \psi_2)$-majority model for $\psi_1 = 0.7$ and $\psi_2 = 0.8$ and different values of $p_b$ on TW SN and random graphs PA, HRG and RRG with comparable parameters.}

\label{fig: density12}
\end{figure}
\section{Stabilization Time and Period}

\paragraph{Stabilization in Majority Model.} 
As discussed, prior work has shown that the stabilization time and period of the majority model are bounded by $\mathcal{O}(|E|)$ and 2. However, in the random setting a poly-logarithmic bound on the expected stabilization time is believed to exist (but only proven for a some special classes of graphs).
We provide evidence to support this conjecture. Firstly, we prove in Theorem~\ref{Stabilization time cycle}that the expected stabilization time of the majority model on a cycle $C_n$ is at most $\log n$.
\begin{theorem}
\label{Stabilization time cycle}
Consider the majority model on a cycle $C_n$. If each node is initially black independently with probability $p_b$, then the process stabilizes in $\log n$ rounds a.a.s. 
\end{theorem}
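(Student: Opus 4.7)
Since $C_n$ is $2$-regular, the majority rule is extremely restrictive: a node $v$ changes colour in a given round if and only if both of its neighbours hold the colour opposite to $v$, which is the same as saying that $v$ is a \emph{singleton} in the maximal-monochromatic-block decomposition of the current coloring. The plan is to turn this into a deterministic block-shrinking dynamics and then control the initial block structure by a union-bound argument.

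The first ingredient is the following one-step picture. Call a maximal sequence of $m$ consecutive singleton blocks, flanked by two blocks of length $\geq 2$, a \emph{run} of length $m$. We claim that after one round such a run is replaced, in the same cyclic location, by a run of length $\max(m-2,0)$, and the two flanking long blocks each grow by $1$. Writing the generic run as $B_{j-1},B_j,\dots,B_{j+m-1},B_{j+m}$ and using that block colours alternate, a short parity check shows that the end singletons $B_j$ and $B_{j+m-1}$ flip to the colour of $B_{j-1}$ and $B_{j+m}$ respectively and thereby get absorbed, while the interior singletons also flip but remain singletons in the new configuration. By induction the process reaches a fixed point after at most $\lceil M_0/2\rceil$ rounds, where $M_0$ is the longest run of singletons in $\mathcal{C}_0$.

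The second ingredient is a tail bound on $M_0$. A run of singletons of length $m$ requires $m+2$ consecutive positions of $\mathcal{C}_0$ to alternate in colour; a fixed alternating pattern of length $L$ has probability at most $(p_b(1-p_b))^{\lfloor L/2\rfloor}\leq 4^{-\lfloor L/2\rfloor}$, the worst case occurring at $p_b=1/2$. Summing over the two starting colours and the $n$ cyclic starting positions gives
\[
\Pr\!\left[M_0\geq m\right]\;\leq\;2n\cdot 4^{-\lfloor(m+2)/2\rfloor},
\]
which is $o(1)$ as soon as $m\geq \log_2 n+\omega(1)$. Combining with the first ingredient, the stabilization time is at most $(\log_2 n)/2+o(\log n)<\log n$ a.a.s.

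The main obstacle is expected to be the parity-sensitive case analysis in the first ingredient: one must verify that, independent of the parity of $m$ and of the lengths of the flanking blocks, the flipped end singletons really do match the colour of the adjacent long block (so that absorption occurs), and the flipped interior singletons do not accidentally fuse with their new neighbours. The only configuration not covered is a fully alternating cycle (possible only for even $n$), but that has probability at most $2(p_b(1-p_b))^{n/2}=o(1)$ and in any case enters its period-$2$ orbit after a single round, so it does not affect the a.a.s.\ conclusion.
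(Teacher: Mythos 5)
Your proof is correct and follows essentially the same route as the paper's: your ``runs of singleton blocks'' are exactly the paper's alternating paths, the shrink-by-two-per-round observation is identical, and the tail bound via $\left(p_b(1-p_b)\right)^{\lfloor L/2\rfloor}\le 4^{-\lfloor L/2\rfloor}$ combined with a union bound over the $O(n)$ cyclic starting positions is the same calculation. Your explicit handling of the fully alternating cycle (which enters its period-two orbit immediately and has probability $o(1)$) is a small point of extra care that the paper's proof glosses over.
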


\begin{proof}

Let $u_1,\cdots,u_k$ be a path of length $k \geq 2$. We say that this is a \emph{black path} (resp. \emph{white path}) if all $u_i$s are black (resp. white). We call the path \emph{alternating} if every two adjacent nodes have opposite colors.
Consider an arbitrary coloring of cycle $C_n$. We can partition $C_n$ into black, white and alternating paths. We observe that all nodes in the black and white paths never change their color under the majority model, and the size of an alternating path reduces by at least two in each round until it disappears. (A visualization of this process is depicted in Figure~\ref{fig:cycle bichrom paths}.) This implies that the length of a longest alternating path divided by two is an upper bound on the stabilization time of the process.

\begin{figure}[h]
\centering
\includegraphics[scale=1]{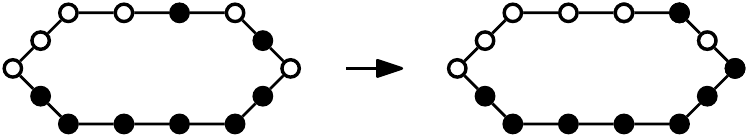}
\caption{A cycle with a black path of length 6, a white path of length 4, and an alternating path of length 4. After one application of the majority model, the length of the alternating path decreases by 2.} 
\label{fig:cycle bichrom paths}
\end{figure}

It thus suffices to show that if we color each node black independently with probability $p_b$, then a.a.s there is no alternating path of size larger than $2 \log n$. Consider an arbitrary path $\mathcal{P}$ of size $2 \log n$ in $C_n$. 
\begin{align*}
  \textrm{Pr}\left[\mathcal{P} \text{ is alternating}\right] = 2\left(p\left(1-p\right)\right)^{\log n}\le
  2\left(\frac{1}{4}\right)^{\log n} = 2n^{-2}
\end{align*}
There are linearly many paths of length $2 \log n$ in $C_n$. Therefore, by a union bound argument, there is no alternating path of size $2 \log n$ a.a.s.
\end{proof}

Furthermore, we investigate the expected stabilization time of the majority model on several real world SNs and random graph models. This is depicted for SD SN and random graphs with comparable parameters in Figure~\ref{figure3}~(left). (See appendix \ref{Sec: Experiment 4} for other SNs.) In our experiments on all these graphs the process ends in less than 30 rounds, while the number of edges is around $m=500000$. Thus, loosely speaking, the expected stabilization time here seems to be poly-logarithmic in $m$ rather than linear. Furthermore, we observe that adding a RRG on top of SD SN (similar to our first countermeasure in Section~\ref{Power-of-elites}) speeds up the process. Hence, adding random edges not only helps the color with initial majority to win, but also makes this happen faster.

\paragraph{Stabilization in ($\psi_1$, $\psi_2$)-Majority Model.} Consider the $(\psi_1$, $\psi_2$)-majority model, for $\psi_1=\psi_2$, on a graph $G=(V,E)$. Building upon a potential function argument, we prove in Theorem ~\ref{upper bound stab time gen} that the stabilization time and period of the process are upper-bounded by $4m^*$ and 2, respectively, where $m^*$ denotes the number of bichromatic edges in the initial coloring. (Recall that an edge is bichromatic if its endpoints have opposite colors.) Note that $m^*\le |E|$ and the ($\psi$, $\psi$)-majority model coincides with the majority model when $\psi$ is slightly larger than $1/2$. Thus, Theorem~\ref{upper bound stab time gen}, as a special case, bounds the stabilization time of the majority model with $\mathcal{O}(|E|)$, previously proven by~\cite{poljak1986pre}. The main idea of our proof is to establish a relation between the $(\psi, \psi)$-majority model on $G$ and a process called the periodic majority model on a weighed graph $H$, which is constructed from $G$. Then, we argue that in this new process the summation of weights of bichromatic edges decreases in each round. The proof is given in Appendix \ref{appendix-stabilization-time-lower}.

\begin{theorem}
\label{upper bound stab time gen}
Consider the $(\psi, \psi)$-majority model, for some $\psi \in (1/2,1]$, on a graph $G = (V,E)$. The stabilization time is upper-bounded by $4m^*$, where $m^*$ is the number of bichromatic edges in the initial coloring, and the period is always one or two. 
\end{theorem}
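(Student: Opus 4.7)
The plan is to apply the classical Goles--Olivos/Poljak--Sura potential-function technique on an auxiliary weighted graph $H$ tailored so that weighted majority on $H$ coincides exactly with the $(\psi,\psi)$-rule on $G$. I would let $H$ share the vertex set of $G$, keep every edge of $G$ with unit weight, and attach at every node $v$ a self-loop of integer weight $s_v \approx (2\psi-1)d(v)$, adjusted for the non-strict inequality in the update rule. A self-loop biases a node toward preserving its current colour, so this calibration turns the strict weighted-majority threshold at $v$ in $H$ into the threshold $\psi\,d(v)$ used on $G$. This is precisely the ``periodic majority model on the weighted graph $H$'' mentioned in the theorem statement.

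With this reduction I would use the standard symmetric Lyapunov function
\begin{equation*}
\Phi_t \;=\; \sum_{\{u,v\}\in E(H)} w_{uv}\Bigl(\mathbf{1}[\mathcal{C}_t(u)\neq \mathcal{C}_{t-1}(v)] + \mathbf{1}[\mathcal{C}_t(v)\neq \mathcal{C}_{t-1}(u)]\Bigr),
\end{equation*}
where the self-loop at $v$ contributes $s_v\mathbf{1}[\mathcal{C}_t(v)\neq \mathcal{C}_{t-1}(v)]$. A one-node case analysis -- at every updating node the new colour is chosen to minimise the weighted disagreement with the previous round -- shows $\Phi_{t+1}\le \Phi_t$, with a strict drop by at least $1$ unless $\mathcal{C}_{t+1}=\mathcal{C}_{t-1}$. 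Hence, in at most $\Phi_1$ rounds the dynamics enters a configuration with $\mathcal{C}_{t+1}=\mathcal{C}_{t-1}$, which both bounds the stabilisation time and shows that the period is $1$ or $2$.

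The principal obstacle is sharpening the classical bound of order $|E(H)|+\sum_v s_v$ to $4m^*$. The key idea is that every contribution to $\Phi_1$ can be charged to an initially bichromatic edge of $G$. Indeed, an ordered pair $(u,v)$ with $u\neq v$ contributes only if $\mathcal{C}_1(u)\neq \mathcal{C}_0(v)$: either $\{u,v\}$ was already bichromatic at time $0$ (so it is counted by $m^*$), or $u$ flipped in the first round, which by the $(\psi,\psi)$-rule requires at least $\psi d(u)\ge d(u)/2$ of the edges incident to $u$ to be bichromatic at time $0$; a self-loop contribution at $v$ analogously forces $v$ to have flipped. Distributing each such contribution onto the bichromatic edges that witness the flip yields, after a double-counting check, $\Phi_1\le 4m^*$. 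The delicate point will be verifying that no bichromatic edge of $G$ is charged more than four times across all such contributions; this is where the constant $4$ in the statement comes from, and where the argument departs from the off-the-shelf Poljak--Sura bound that merely gives $\mathcal{O}(|E|)$.
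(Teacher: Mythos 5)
Your plan is, at bottom, the paper's own proof in different clothing. The paper builds a weighted bipartite double cover $H$ (two copies $X,Y$ of $V$, a special edge $\{x_i,y_i\}$ of weight roughly $(2\psi-1)d(v_i)$ minus a small fractional correction, all other weights $1$) and runs a ``periodic'' dynamics that alternates between the two layers; that is exactly the two-time-slice reading of your self-loop graph, the special edges are your self-loops $s_v$, and the paper's potential (weight of bichromatic edges of $H$, with a correction term for the special edges) is your $\Phi_t$. The one point where you genuinely diverge is the bookkeeping of the initial potential: the paper colors both layers of $H$ by $\mathcal{C}_0$, so the special edges start monochromatic and the potential starts at exactly $2m^*$ with no charging needed (equivalently, define $\Phi_0$ with $\mathcal{C}_{-1}:=\mathcal{C}_0$); you instead bound $\Phi_1$ by charging to initially bichromatic edges. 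Your charging does close: if $u$ flips in round $1$ then at least $\lceil\psi d(u)\rceil$ of its incident edges are initially bichromatic, so the $d(u)-b_0(u)\le(1-\psi)d(u)$ monochromatic-pair contributions and the self-loop contribution $s_u\le 2b_0(u)-d(u)-1$ together are at most $b_0(u)$, and summing $b_0(u)$ over all nodes gives $2m^*\le 4m^*$. It is simply more work than starting the clock at $t=0$.

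The genuine gap is ties, which is precisely what the paper's fractional weight corrections ($-\tfrac12$ and $-\tfrac{1}{4n}$) are engineered to eliminate and which your sketch does not address. With an integer self-loop and the keep-your-colour convention, the claim ``strict drop by at least $1$ unless $\mathcal{C}_{t+1}=\mathcal{C}_{t-1}$'' is false: a node $u$ that flipped in round $t$ and then faces an exact weighted tie in round $t+1$ keeps its colour, so $\mathcal{C}_{t+1}(u)=\mathcal{C}_t(u)\neq\mathcal{C}_{t-1}(u)$, yet it contributes exactly $0$ to $\Phi_{t+1}-\Phi_t$; the potential can stall while the configuration is not yet $2$-periodic, and the whole termination argument collapses. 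You must make ties impossible. This can be done inside your integer framework by fixing the parity of the self-loop, e.g. $s_v=2\lceil\psi d(v)\rceil-d(v)-1$, which simultaneously realizes the threshold ``at least $\psi d(v)$ opposite-coloured neighbours'' exactly and makes the total incident weight $2\lceil\psi d(v)\rceil-1$ odd, so every node with $\mathcal{C}_{t+1}(u)\neq\mathcal{C}_{t-1}(u)$ contributes at most $-1$. Note also that the fix interacts with your constant: if you instead kill ties with a half-integer perturbation, the guaranteed drop falls to $1/2$ and $\Phi_1\le 4m^*$ only yields $8m^*$; either the odd-parity choice above or the $\Phi_0=2m^*$ start is needed to land on $4m^*$.
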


Furthermore, we experimentally analyze the expected stabilization time of the ($\psi_1, \psi_2$)-majority model. See Figure~\ref{figure3}~(right) for our results on TW SN (and random graph models with comparable parameters), and Appendix \ref{Sec: Experiment 5} for other SN's.We observe that the initial probabilities $p_b$ for which the process takes the longest to stabilize, are identical to the empirically observed threshold values depicted in Figure~\ref{fig: density12}~(right). We note that this is also the case in the majority model, as the stabilization time peaks at $p_b = 1/2$ (visible in Figure~\ref{figure3}~(left)), at which also the phase transition occurs (visible in Figure~\ref{fig: density12}~(left)). 

\begin{figure}[h]
\centering
\includegraphics[scale=0.27]{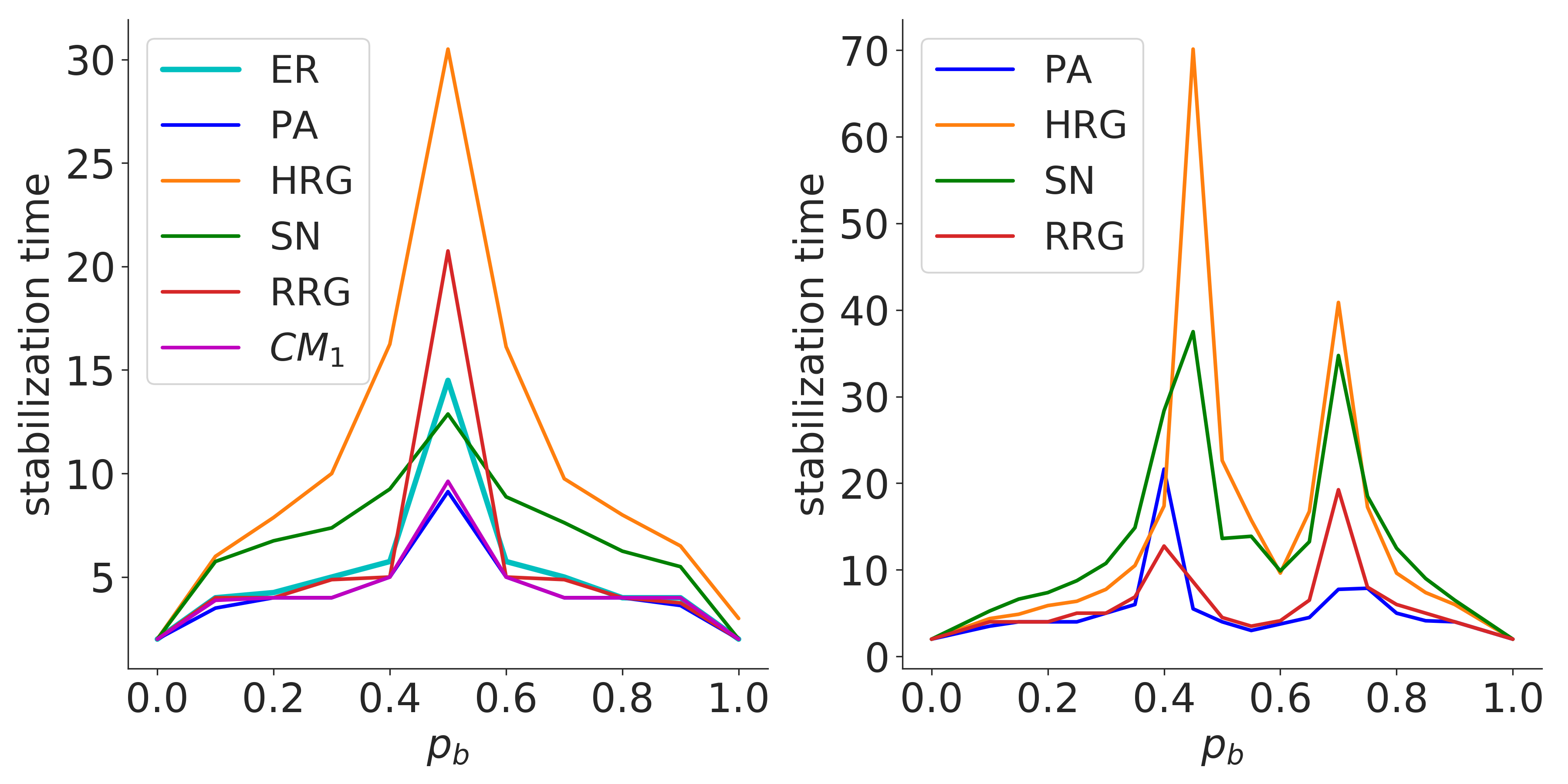}

\caption{The expected stabilization time for different values of $p_b$ in \textbf{(left)} the majority model on SD SN and random graphs with comparable parameters and $CM_1$ (which corresponds to the union of SD SN and $\mathcal{G}_{n,d}$ for $d = \bar{d}$, where $\bar{d}$ is the average degree of SD SN), \textbf{(right)} the $(\psi_1, \psi_2)$-majority model, for $\psi_1 = 0.7$ and $\psi_2 = 0.8$, on TW SN and PA, HRG and RRG with comparable parameters.}

\label{figure3}
\end{figure}

\section{Conclusion}
We showed that in the real world social networks, an extremely small set of high-degree nodes (i.e., elites) can determine the output of an opinion forming process. We developed two countermeasures which can be applied to overpower such a small set of elite nodes. In general, motivated from the study of effective marketing strategies, the problem of finding a small set of agents whose opinion governs the final dominant opinion has been extensively studied for various models. However, the current understanding of the development of countermeasure mechanisms to subdue such a small  group with disproportionate influencing power is limited. We aspire this work to be a starting point for further investigation of effective countermeasures for a large spectrum of models. 

Furthermore, we proved that the ($\psi_1$, $\psi_2$)-majority model exhibits a threshold behavior with two phase transitions at $1-\psi_1$ and $\psi_2$ when the underlying graph is a dense Erd\H{o}s-R\'enyi random graph. Our experiments suggest that a similar threshold behavior might exist for other classes of graphs, but the threshold values are different from $1-\psi_1$ and $\psi_2$. It would be interesting to determine these values in the future work.

In addition, we provided several experimental and theoretical evidence to support the widely believed conjecture of a poly-logarithmic upper bound on the expected stabilization time of the majority model; however, it remains open in its full generality. For the ($\psi_1, \psi_2$)-majority model, for $\psi_1 = \psi_2$, we proved that the stabilization time is bounded by 4 times the number of bichromatic edges in the initial coloring. We believe this can be an important milestone to settle the conjecture. Specifically, if one can prove that from a random coloring the process reaches a coloring with poly-logarithmically many bichromatic edges in a poly-logarithmic number of rounds in expectation, then our result yields the conjecture. 

\bibliographystyle{unsrtnat}
\bibliography{references}

\newpage
\appendix

\setcounter{theorem}{2}
\setcounter{lemma}{0}
\renewcommand{\thetheorem}{\Alph{section}.\arabic{theorem}}
\renewcommand{\thelemma}{\Alph{section}.\arabic{lemma}}

\section{Two Phase Transitions in ($\psi_1$, $\psi_2$)-Majority Model on ER}
\label{Sec: Two Phase Transition}
Consider the ($\psi_1$, $\psi_2$)-majority model on a graph $G=(V,E)$ with a random initial coloring, where each node is colored black independently with probability $p_b$. In Theorem~\ref{threshold-theorem}, we prove that if the minimum degree in $G$ is at least $C\log n$ for a sufficiently large constant $C$, the process goes through two phase transitions at $1-\psi_1$ and $\psi_2$. More accurately, for an arbitrarily small $\zeta>0$ a.a.s.: (i) if $p_b\le (1-\zeta)(1-\psi_1)$, then the white color takes over (ii) if $p_b\ge (1+\zeta)(1-\psi_1)$ and $p_b\le (1-\zeta)\psi_2$, none of the two colors will take over and (iii) if $p_b\ge (1+\zeta)\psi_2$, then the black color takes over. Theorem~\ref{threshold-theorem}, in particular, holds a.a.s. for the Erd\H{o}s-R\'{e}nyi random graph $\mathcal{G}_{n,q}$ when $q\ge \frac{2C\log n}{n}$. This is true because it is well-known that a.a.s. the minimum degree of $\mathcal{G}_{n,q}$ is at least $C\log n$ in this setting, cf.\cite{dubhashi2009concentration}. (This is essentially implied by an application of the Chernoff bound (see Lemma~\ref{Chernoff}) and a union bound.) 

To prove Theorem~\ref{threshold-theorem}, we first need to provide Lemma~\ref{threshold-lemma}, which is built on the Chernoff bound given in Lemma~\ref{Chernoff}.

\begin{lemma}[Chernoff bound, cf.~\cite{dubhashi2009concentration}]
\label{Chernoff}
Suppose that $x_1,\cdots,x_n$ are independent Bernoulli random variables and let $X$ denote their sum, then for $0\leq \epsilon\leq 1$
\begin{itemize}
\item[(i)]$\textrm{Pr}[X\leq \left(1-\epsilon\right)\mathbb{E}[X]]\leq \exp\left({-\frac{\epsilon^2\mathbb{E}[X]}{2}}\right)$
\item[(ii)]$\textrm{Pr}[\left(1+\epsilon\right)\mathbb{E}[X]\leq X]\leq \exp\left({-\frac{\epsilon^2\mathbb{E}[X]}{3}}\right)$.
\end{itemize}
\end{lemma}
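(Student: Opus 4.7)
The plan is to use the Cram\'er--Chernoff (exponential Markov) method, which reduces both tails to a single moment-generating-function estimate. The starting point is that for any $\lambda>0$, Markov's inequality applied to $e^{\lambda X}$ gives $\textrm{Pr}[X\ge t]\le e^{-\lambda t}\mathbb{E}[e^{\lambda X}]$, and by independence of $x_1,\dots,x_n$ one has $\mathbb{E}[e^{\lambda X}]=\prod_i\mathbb{E}[e^{\lambda x_i}]$. For each Bernoulli $x_i$ with mean $p_i$, the identity $\mathbb{E}[e^{\lambda x_i}]=1+p_i(e^{\lambda}-1)$ combined with $1+y\le e^{y}$ yields $\mathbb{E}[e^{\lambda x_i}]\le\exp\bigl(p_i(e^{\lambda}-1)\bigr)$. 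Writing $\mu:=\mathbb{E}[X]=\sum_i p_i$ and multiplying the factors gives the uniform bound $\mathbb{E}[e^{\lambda X}]\le\exp\bigl(\mu(e^{\lambda}-1)\bigr)$, which is the only place the Bernoulli structure is really used.

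For the upper tail (ii) I would take $t=(1+\epsilon)\mu$ and optimize in $\lambda$, which after a short differentiation yields the choice $\lambda=\ln(1+\epsilon)>0$. Substituting back gives
\[
\textrm{Pr}[X\ge(1+\epsilon)\mu]\le\left(\frac{e^{\epsilon}}{(1+\epsilon)^{1+\epsilon}}\right)^{\mu},
\]
whose exponent is $-\mu\bigl((1+\epsilon)\ln(1+\epsilon)-\epsilon\bigr)$. For the lower tail (i) the analogous choice $\lambda=\ln(1-\epsilon)<0$ applied to $\textrm{Pr}[X\le t]=\textrm{Pr}[e^{\lambda X}\ge e^{\lambda t}]$ with $t=(1-\epsilon)\mu$ yields
\[
\textrm{Pr}[X\le(1-\epsilon)\mu]\le\left(\frac{e^{-\epsilon}}{(1-\epsilon)^{1-\epsilon}}\right)^{\mu},
\]
with exponent $-\mu\bigl(\epsilon+(1-\epsilon)\ln(1-\epsilon)\bigr)$. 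At this point the probabilistic content of the argument is finished, and what remains is purely analytic.

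The main obstacle is therefore the two real-analytic inequalities needed to extract the advertised constants: $(1+\epsilon)\ln(1+\epsilon)-\epsilon\ge\epsilon^{2}/3$ and $\epsilon+(1-\epsilon)\ln(1-\epsilon)\ge\epsilon^{2}/2$ on $[0,1]$. Both are standard calculus exercises (compare Taylor expansions around $0$, or verify that each difference vanishes at $0$ and has nonnegative derivative on $[0,1]$), and they are where the asymmetry between the constants $2$ and $3$ originates: the lower tail is genuinely sharper because $\ln(1-\epsilon)$ grows more slowly in magnitude than $\ln(1+\epsilon)$ does. Any weaker inequality of the form $c\epsilon^{2}$ with a universal $c>0$ would suffice for the qualitative concentration used later in Lemma~\ref{threshold-lemma} and Theorem~\ref{threshold-theorem}, but pinning down exactly $1/2$ and $1/3$ requires these specific elementary estimates.
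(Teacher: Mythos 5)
The paper does not prove this lemma at all --- it is imported verbatim as a standard result from the cited reference of Dubhashi and Panconesi --- so there is no in-paper proof to compare against. Your argument is the standard and correct Cram\'er--Chernoff derivation (exponential Markov, the MGF bound $\mathbb{E}[e^{\lambda X}]\le\exp(\mu(e^{\lambda}-1))$, the optimizers $\lambda=\ln(1\pm\epsilon)$, and the two elementary inequalities $(1+\epsilon)\ln(1+\epsilon)-\epsilon\ge\epsilon^{2}/3$ and $\epsilon+(1-\epsilon)\ln(1-\epsilon)\ge\epsilon^{2}/2$ on $[0,1]$), which is exactly the proof given in that reference, up to the trivial boundary cases $\mu=0$ and $\epsilon=1$ that are handled by continuity or directly.
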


\begin{lemma}
\label{threshold-lemma}
Consider an arbitrary node $v$ of degree $d(v)$ in a graph $G=(V,E)$. Assume that we color each node among the neighbors of $v$ with color $a\in \{b,w\}$ independently with some probability $p\in [1/2,1]$. At least $(1-\epsilon) pd(v)$ (analogously, at most $(1+\epsilon)pd(v)$) number of its neighbors will be colored with color $a$ for an arbitrarily small constant $\epsilon>0$ with probability at least $1-n^{-2}$, if $d(v)\ge C\log n$ for a sufficiently large constant $C$.
\end{lemma}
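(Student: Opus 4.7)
The plan is to obtain this lemma as a direct corollary of the Chernoff bound already stated in Lemma~\ref{Chernoff}. For the node $v$ with degree $d(v)$, enumerate its neighbors as $u_1, \ldots, u_{d(v)}$ and let $X_i$ be the indicator random variable that $u_i$ receives color $a$. By hypothesis the $X_i$ are independent Bernoulli variables with parameter $p$, so $X := \sum_{i=1}^{d(v)} X_i$ is a sum of independent indicators with $\mathbb{E}[X] = p\,d(v)$, exactly the setting Lemma~\ref{Chernoff} treats.

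The first step is to apply part (i) of Lemma~\ref{Chernoff} with this $X$ and with the given $\epsilon$, yielding
\[
\Pr\!\left[X \le (1-\epsilon)p\,d(v)\right] \le \exp\!\left(-\tfrac{\epsilon^2 p\, d(v)}{2}\right).
\]
The second step is to plug in the two assumptions $p \ge 1/2$ and $d(v) \ge C\log n$, giving an upper bound of $\exp(-\epsilon^2 C (\log n)/4) = n^{-\epsilon^2 C/4}$. Choosing $C \ge 8/\epsilon^2$ (this is what ``sufficiently large constant $C$'' means in the statement) makes this at most $n^{-2}$.

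The third step repeats the same computation using part (ii) of Lemma~\ref{Chernoff}, which gives $\Pr[X \ge (1+\epsilon)p\,d(v)] \le \exp(-\epsilon^2 p\, d(v)/3)$; under the same hypotheses this is at most $n^{-\epsilon^2 C/6}$, again bounded by $n^{-2}$ for $C$ large enough (say $C \ge 12/\epsilon^2$). Taking $C$ to be the maximum of the two thresholds yields both tails simultaneously.

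There is essentially no obstacle here: the lemma is a bookkeeping application of Chernoff, and the only thing to be careful about is that the constant $C$ must be chosen in terms of $\epsilon$ (larger $C$ for smaller $\epsilon$), and that $p \ge 1/2$ is used only to replace $p\,d(v)$ in the exponent by $\tfrac{1}{2}d(v)$ so that the bound depends on $d(v)$ alone. The statement is then immediate.
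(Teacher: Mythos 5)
Your proof is correct and follows essentially the same route as the paper: define indicator variables for the neighbors receiving color $a$, apply Lemma~\ref{Chernoff} to their sum with $\mathbb{E}[X]=p\,d(v)$, and use $p\ge 1/2$ together with $d(v)\ge C\log n$ to push the tail below $n^{-2}$ for $C$ large enough (the paper likewise handles the upper tail by the analogous application of part (ii)). Your only addition is making the dependence of $C$ on $\epsilon$ explicit, which the paper leaves implicit in the phrase ``sufficiently large constant $C$.''
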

\begin{proof}
We prove that at least $(1-\epsilon) pd(v)$ number of its neighbors will be colored with color $a$ with probability at least $1-n^{-2}$, using Lemma~\ref{Chernoff} (i). (The other case can be proven analogously using Lemma~\ref{Chernoff} (ii).) Let us label the neighbors of $v$ from $u_1$ to $u_d$, where $d:=d(v)$. We define the Bernoulli random variable $x_i$ to be 1 if and only if node $u_i$ is colored with color $a$. Let $X:=\sum_{i=1}^{d}x_i$ be the number of neighbors of $v$ which are colored with $a$. Then, we have $\mathbb{E}[X]=\sum_{i=1}^{d}\textrm{Pr}[x_i=1]=pd$. Since $x_i$s are independent, we can apply Lemma~\ref{Chernoff} (i), which yields
\begin{align*}
  \textrm{Pr}[X\le (1-\epsilon)pd]&=\textrm{Pr}[X\le (1-\epsilon)\mathbb{E}[X]] \\ & \le \exp\left(-\frac{\epsilon^2\mathbb{E}[X]}{2}\right) \\ & = \exp\left(-\frac{\epsilon^2pd}{2}\right) \\ & \le \exp\left(-\frac{\epsilon^2C\log n}{4}\right)
\end{align*}
where in the last step we used $d\ge C\log n$ and $p\ge 1/2$. For a sufficiently large constant $C$, this probability is bounded by $\exp(-2\log n)=n^{-2}$. Thus, we have $X\ge (1-\epsilon)pd$ with probability at least $1-n^{-2}$.
\end{proof}

\begin{theorem}
\label{threshold-theorem}
Consider the ($\psi_1$, $\psi_2$)-majority model on a graph $G=(V,E)$ for some $\psi_1,\psi_2\in (1/2,1)$ and assume that each node is black independently with probability $p_b$. Suppose that all nodes in $G$ are of degree at least $C\log n$ for a sufficiently large constant $C$. Then, for an arbitrarily small constant $\zeta>0$, we have a.a.s. that
\begin{itemize}
\item[(i)] if $p_b\le (1-\zeta)(1-\psi_1)$, the white color takes over
\item[(ii)] if $p_b\ge (1+\zeta)(1-\psi_1)$ and $p_b\le (1-\zeta)\psi_2$, none of the colors will take over
\item[(iii)] if $p_b\ge (1+\zeta)\psi_2$, the black color takes over.
\end{itemize}
\end{theorem}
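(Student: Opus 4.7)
The overall strategy is that the process either collapses to monochromaticity in a single round (cases (i) and (iii)) or never moves at all (case (ii)). The engine is Lemma~\ref{threshold-lemma}, which concentrates $|N_b^{\mathcal{C}_0}(v)|$ and $|N_w^{\mathcal{C}_0}(v)|$ around $p_b d(v)$ and $(1-p_b)d(v)$ simultaneously for every node $v$.

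First I would fix a sufficiently small $\epsilon>0$ (depending on $\zeta,\psi_1,\psi_2$) and, if needed, enlarge $C$ so that Lemma~\ref{threshold-lemma} applies with parameter $\epsilon$. For each $v$, at least one of $p_b, 1-p_b$ is $\ge 1/2$; applying the lemma to that color and writing $|N_w^{\mathcal{C}_0}(v)| = d(v) - |N_b^{\mathcal{C}_0}(v)|$ yields two-sided bounds of the form $(1\pm\epsilon)p_b d(v)$ and $(1\pm\epsilon)(1-p_b)d(v)$ on the two neighbor counts, each failing with probability at most $n^{-2}$. A union bound over the $n$ nodes makes these bounds hold simultaneously for every $v$ a.a.s., after which the rest of the argument is deterministic.

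For case (i), the hypothesis rewrites as $1-p_b \ge \psi_1 + \zeta(1-\psi_1)$, so the lower bound $|N_w^{\mathcal{C}_0}(v)| \ge (1-\epsilon)(1-p_b)d(v) \ge \psi_1 d(v)$ holds for every $v$ and every black node flips to white in round one. Simultaneously $(1+\epsilon)p_b < 1-\psi_1 < \psi_2$ (using $\psi_1,\psi_2 > 1/2$), so no white node reaches its $\psi_2 d(v)$ threshold. Hence $\mathcal{C}_1$ is all white, which is visibly a fixed point. Case (iii) follows by the mirror argument with the two colors swapped. Case (ii) is the shortest: the two hypotheses yield $(1+\epsilon)(1-p_b) < \psi_1$ and $(1+\epsilon)p_b < \psi_2$ for $\epsilon$ small enough, so no black node has enough white neighbors to switch and no white node has enough black neighbors to switch; thus $\mathcal{C}_0$ is itself a fixed point, and a single additional Chernoff bound on $|B_0|$ shows both colors are present in linear quantity a.a.s., so neither can take over.

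No step is a genuine obstacle; the only delicate bookkeeping is choosing $\epsilon$ small enough that the multiplicative slack in Lemma~\ref{threshold-lemma} does not eat the margins $\zeta(1-\psi_1)$, $\zeta\psi_2$, and the gap $\psi_2-(1-\psi_1)$, and then picking $C$ large enough for that $\epsilon$. Both choices depend only on the fixed constants $\zeta,\psi_1,\psi_2$.
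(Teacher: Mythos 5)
Your proposal is correct and follows essentially the same route as the paper's proof: concentrate each node's black/white neighbor counts via Lemma~\ref{threshold-lemma}, union-bound over all $n$ nodes, and then observe that in cases (i) and (iii) the process collapses to a monochromatic fixed point after one round while in case (ii) no node ever meets its switching threshold (the paper phrases (ii) as ``white nodes stay white forever'' plus the symmetric statement, which is the same content as your fixed-point observation, together with the Chernoff argument that both colors appear initially). One small bookkeeping remark: applying the lemma to the majority color and subtracting gives an additive error $\epsilon d(v)$ on the minority count rather than the multiplicative $(1\pm\epsilon)p_b d(v)$ you state, but since all thresholds are separated by constants depending only on $\zeta,\psi_1,\psi_2$, this does not affect the argument.
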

\begin{proof}
We prove that if $p_b\ge (1+\zeta)\psi_2$, then the black color takes over a.a.s. and it does not if $p_b\le (1-\zeta)\psi_2$. One can analogously prove that if $p_b\le (1-\zeta) (1-\psi_1)$, then the white color takes over a.a.s. and it does not if $p_b\ge (1+\zeta)(1-\psi_1)$.

Let us first consider the setting of $p_b\ge (1+\zeta)\psi_2$. Applying Lemma~\ref{threshold-lemma} and a union bound implies that for every node $v$, at least $(1-\epsilon)p_bd(v)\ge (1-\epsilon)(1+\zeta)\psi_2d(v)$ of its $d(v)$ neighbors are black with probability at least $1-n^{-1}$. For a sufficiently small choice of $\epsilon>0$, we have $(1-\epsilon) (1+\zeta)\psi_2d(v)> \psi_2 d(v)$. Therefore, with probability at least $1-n^{-1}$ more than $\psi_2$ fraction of the neighbors of every node is black. This implies that a.a.s. all nodes will be colored black after one round.

Now, let us prove that if $p_b \le (1-\zeta) \psi_2$, then a.a.s. the black color does not take over. By a simple application of the Chernoff bound (Lemma~\ref{Chernoff}), one can show that a.a.s. the number of white nodes in the initial coloring is not zero.
Furthermore based on Lemma~\ref{threshold-lemma} and a union bound, with probability at least $1-n^{-1}$ for every white node $v$ at most $(1+\epsilon) p_b d(v)\le (1+\epsilon)(1-\zeta)\psi_2 d(v)$ of its $d(v)$ neighbors are black. Again, by selecting $\epsilon>0$ to be sufficiently small we have $(1+\epsilon)(1-\zeta)\psi_2d(v)<\psi_2 d(v)$. Therefore, with probability at least $1-n^{-1}$, all the white nodes remain white forever, which implies that the black color does not take over.
\end{proof}

\section{Proof of Theorem~\ref{upper bound stab time gen}}
\label{appendix-stabilization-time-lower}
\begin{customthm}{\ref{upper bound stab time gen}}
Consider the $(\psi, \psi)$-majority model, for some $\psi \in (1/2,1]$, on a graph $G = (V,E)$. The stabilization time is upper-bounded by $4m^*$, where $m^*$ is the number of bichromatic edges in the initial coloring, and the period is always one or two.
\end{customthm}
\begin{proof}
Assume that $V = \{v_1,..,v_n\}$. Let $H = (V^H, E^H, \omega)$ be a weighted bipartite graph, which is built based on $G$ (the original graph) in the following way. We define $V^H := X \cup Y$ for $X := \{x_1,...,x_n\}$ and $Y := \{y_1,...,y_n\}$. Furthermore, we let $E^H := \{\{x_i,y_j\} : \{v_i, v_j\} \in E \text{ or } i=j\}$. We define an edge $\{x_i, y_i\}$ to have weight $2 \psi d(v_i) - d(v_i) - \frac{1}{2}$ if $\psi d(v)$ is an integer, and weight $2 \lfloor \psi d(v_i) \rfloor - d(v_i) +1-\frac{1}{4n}$ otherwise. All other edges have weight 1. 

Consider an arbitrary initial coloring of $G$. We  then color $H$ in the following way. We let both $x_i$ and $y_i$ for $1 \leq i \leq n$ have the same color as $v_i$ upon initialization. Now, assume that we run the ($\psi$, $\psi$)-majority model on $G$, but we run a different process called the \emph{periodic majority model} on $H$. In the periodic majority model in the $t$-th round, if $t$ is odd, all nodes in $X$ update their color, and if $t$ is even, all nodes in $Y$ update their color. When a node updates, it chooses (similarly as in the majority model) the \textit{weighted majority} among its neighbors. Important to note is that in this setting $y_i$ (resp. $x_i$) is a neighbor of $x_i$ (resp. $y_i$). We observe that a tie is never possible in this setting because all the incident edges of a node have weight 1, except for exactly one edge.
 
Consider an arbitrary initial coloring on $G$ and let $\mathcal{C}_t(v_i)$, for $1 \leq i \leq n$ and $t \in \mathbb{N}_0$, denote the color of node $v_i$ after $t$ rounds of the ($\psi$, $\psi$)-majority model. Similarly, let $\mathcal{C}_t(x_i)$ and $\mathcal{C}_t(y_i)$ denote the colors of nodes $x_i$ and $y_i$ after $t \in \mathbb{N}_0$ rounds of the periodic majority model on $H$. We claim that for odd $t$, $\mathcal{C}_t(v_i) = \mathcal{C}_t(x_i)$ and for even $t$, $\mathcal{C}_t(v_i) = \mathcal{C}_t(y_i)$ for all $1 \leq i \leq n$. We prove this claim by applying induction on $t$. For the base case, at $t=0$, we note that the claim holds because we know that $\mathcal{C}_0(v_i)=\mathcal{C}_0(y_i)$ based on our color initialization. For the inductive step, we assume that the statement is true for some $t$ and prove that it is also true for $t+1$. Suppose that $t$ is even (the proof holds in the same way when $t$ is odd). By the induction hypothesis, $\mathcal{C}_t(y_i) = \mathcal{C}_t(v_i)$ for $1 \leq i \leq n$. Consider an arbitrary node $v_j$, $1 \leq j \leq n$. Without loss of generality, assume that $\mathcal{C}_t(v_j)=w$. We prove that $\mathcal{C}_{t+1}(v_j) = \mathcal{C}_{t+1}(x_j)$, meaning that $\mathcal{C}_{t+1}(v_j) = b$ if and only if $\mathcal{C}_{t+1}(x_j) = b$. Let $n_b^t(v_j)$ denote the number of black neighbor of $v_j$ in the $t$-th round and assume that $n_b^t(x_j)$ denotes the number of $y_i$s for $i \neq j$ in $x_j$'s neighborhood which are black in the $t$-th round. Note that by construction and the induction hypothesis, $n_b^t(v_j) = n_b^t(x_j)$.
Let us first consider the case that $\psi d(v_j)$ is an integer. We observe that $\mathcal{C}_{t+1}(v_j) = b$ if and only if $n^t_b(v_j) \geq \psi d(v_j)$. We have that

\begin{align*}
  n_b^t(v_j) \geq \psi d(v_j) \overset{n_b^t(v_j) = n_b^t(x_j)}{\iff} n_b^t(x_j) \geq \psi d(v_j) \\
  \overset{\psi d(v_i)\in \mathbb{N}}{\iff} 
  n_b^t(x_j) > \psi d(v_j) - \frac{1}{4} \iff \frac{n_b^t(x_j)}{2\psi d(v_j) -\frac{1}{2}} > \frac{1}{2}
\end{align*}

where we used the fact that for two integers $e$ and $f$, $e\ge f$ is the same as $e>f-1/4$. The summation of the weights of edges incident to $x_j$ is equal to $d(v_j) + (2 \psi d(v_j) - d(v_j) - \frac{1}{2}) = 2 \psi d(v_j) - \frac{1}{2}$. Therefore, 
$\mathcal{C}_{t+1}(x_j) = b$ if and only if $\frac{n_b^t(x_j)}{2 \psi d(v_j) - \frac{1}{2}} > \frac{1}{2}$ (note that $\mathcal{C}_t(y_j)=w$ and ties are not possible in the periodic majority model on $H$). From this it follows that $\mathcal{C}_{t+1}(v_j)$ = $b$ if and only if $\mathcal{C}_{t+1}(x_j)$= $b$. 

Let us now consider the case where $\psi d(v_j)$ is not an integer. We know that $\mathcal{C}_{t+1}(v_j) = b$ if and only if $n_b^t(v_j) \geq \lceil \psi d(v_j) \rceil$, and again $n_b^t(v_j) = n_b^t(x_j)$. We note that
\begin{align*}
  n_b^t(v_j) \geq \lceil \psi d(v_j) \rceil & \overset{n_b^t(v_j) = n_b^t(x_j)}{\iff} n_b^t(x_j) \geq \lceil \psi d(v_j) \rceil \\ &
  \iff
  n_b^t(x_j) > \lfloor \psi d(v_j) \rfloor +\frac{1}{2}- \frac{1}{8n} \\ &\iff \frac{n_b^t(x_j)}{2 \lfloor \psi d(v_j) \rfloor +1-\frac{1}{4n}} > \frac{1}{2}
\end{align*}
where we used that for an integer $e$ and a non-integer $f$, we know that $e\ge \lceil f\rceil$ is the same as $e>\lfloor f\rfloor +1/2-1/8n$ for $n\ge 1$. We observe that the summation of the weights of edges incident to $x_j$ is equal to $d(v_j) + (2 \lfloor \psi d(v_j) \rfloor - d(v_j) +1 - \frac{1}{4n}) = 2 \lfloor \psi d(v_j) \rfloor +1 -\frac{1}{4n}$. Hence, $\mathcal{C}_{t+1}(x_j) = b$ if and only if $\frac{n_b^t(x_j)}{2 \lfloor \psi d(v_j)\rfloor + 1-\frac{1}{4n}} > \frac{1}{2}$ (where we are again using the facts that $\mathcal{C}_t(y_j)=w$ and ties are not possible in the periodic majority model on $H$). Thus also for the non integer case, $\mathcal{C}_{t+1}(v_j)=b$ if and only if $\mathcal{C}_{t+1}(x_j)=b$. 

So far, we have established a relation between the ($\psi, \psi$)-majority model on $G$ and the periodic majority model on $H$. Assume that $m^*$ is the number of bichromatic edges in the initial coloring of the ($\psi$, $\psi$)-majority model on $G$. We prove that the corresponding periodic majority process on $H$, which we described above, will reach a fixed coloring after at most $4m^*$ rounds. Putting this statement in parallel with what we proved before, we can conclude that in the ($\psi, \psi$)-majority model on $G$, after at most $t$ rounds for some $t \leq 4m^*$, $\mathcal{C}_t(v_i) = \mathcal{C}_{t+2}(v_i)$ and $\mathcal{C}_{t+1}(v_i) = \mathcal{C}_{t+3}(v_i)$ for each $1 \leq i \leq n$. In other words, the process reaches a cycle of colorings of length one or two in at most $4m^*$ rounds.

It remains to prove that the corresponding periodic majority process on $H$ reaches a fixed coloring in at most $4m^*$ rounds. Let $\phi_t^1$ denote the summation of the weights of the bichromatic edges and let $\phi_t^2$ be the number of bichromatic edges of the form $\{x_i,y_i\}$ for $1\le i\le n$ in the $t$-th round. We define the potential function $\phi_t:=\phi_t^1+\phi_t^2/2$. We start by making a couple of observations.
\paragraph{Fact 1.} Note that in the initial coloring $\mathcal{C}_0(x_i)=\mathcal{C}_0(y_i)$ for $1\le i\le n$, which implies that $\phi_0^2=0$. Furthermore, for a node $x_i$, the edge $\{x_i,y_i\}$ is monochromatic and all other incident edges are of weight 1. Thus, the summation of weight of the bichromatic edges incident to $x_i$ is simply equal to the number of bichromatic edges incident to $v_i$ in the initial coloring on $G$. A simple counting argument gives us $\phi_0^1=2m^*$. Therefore, we have that $\phi_0=\phi_0^1+\phi_0^2/2=2m^*$.
\paragraph{Fact 2.}We know that $2 \psi d(v_j) - d(v_j) - 1/2 \geq 1/2$ when $\psi d(v_j)$ is an integer and $2 \lfloor \psi d(v_i) \rfloor - d(v_i) +1 - \frac{1}{4n} \geq -\frac{1}{4n}$ since $\psi>1/2$. We note that all edges have positive weight, except for the edges of the form $\{x_i, y_i\}$ for $1 \leq i \leq n$ for which we have just shown that they have weight at least $-\frac{1}{4n}$.
This implies that $\phi_t^1\ge -\frac{n}{4n}=-\frac{1}{4}$ and trivially $\phi_t^2\ge 0$ for any $t \in \mathbb{N}$. Hence $\phi_t\ge -\frac{1}{4}$.
\paragraph{Fact 3.} Consider an arbitrary $t\in \mathbb{N}_0$. If in rounds $t$ and $t+1$ at least one node changes its color, then the value of potential function deceases at least by $1/2$, i.e., $\phi_{t+1}\le \phi_{t-1}-1/2$.

To see the correctness of Fact 3, assume that a node $x_i$ for $1\le i\le n$ changes its color in the $t$-th round, we claim that the value of potential function decreases at least by $1/2$. (The proof for $y_i$ is analogous.) We need to consider two cases of $x_i$ changing from black to white or changing from white to black. We essentially need to apply the following two observations, then one can check by some simple calculations that the edge weights are tailored in such a way to make our claim work. (i) All edges incident to a node $x_i$ are of weight 1, except from the edge $\{x_i,y_i\}$. (ii) Since $x_i$ changes its color, then the summation of weights of the incident edges whose endpoints have the opposite color is strictly larger than one half of the summation of weights of all incident edges. (Recall that a tie is not possible.)

Let us now combine the above three facts. The potential function is equal to $2m^*$ initially. Then, its value decreases at least by $1/2$ in every two consecutive rounds. Since it cannot have a value smaller than $-1/4$, after at most $2(2m^*+1/4)=4m^*+1/2$ rounds, it must reach a fixed coloring. Since the number of rounds and $m^*$ are both integers, we get the slightly stronger bound of $4m^*$. This finishes the proof.
\end{proof}

\clearpage

\section{Further Plots}

\subsection{Winning sets and Coutermeasures}
\label{Sec: Experiment 1}

\begin{figure}[H]
	\centering
	\begin{subfigure}{.3\textwidth}
		\includegraphics[scale=0.40]{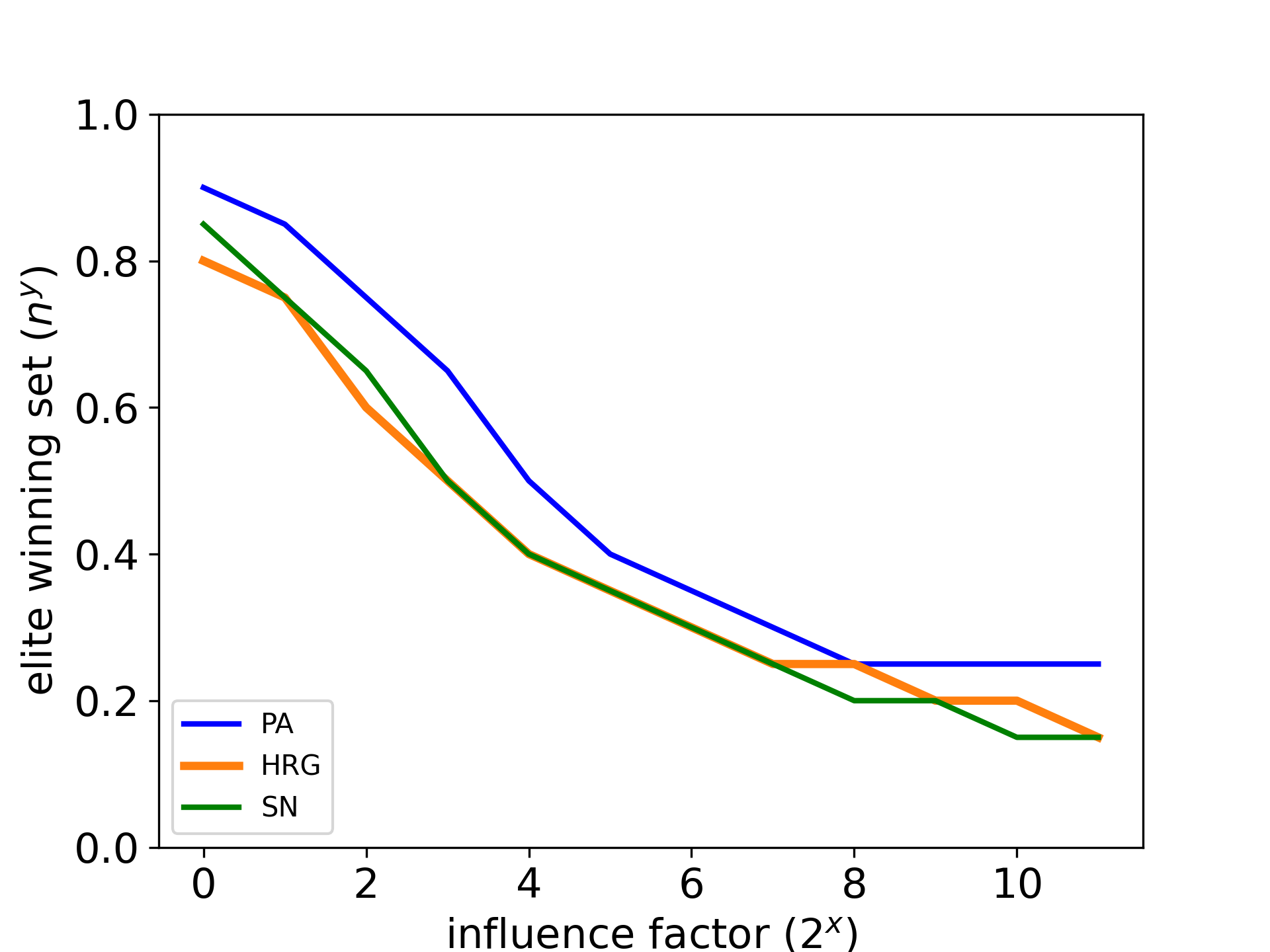}
		\caption{SD SN}
	\end{subfigure}
	\begin{subfigure}{.3\textwidth}
		\includegraphics[scale=0.40]{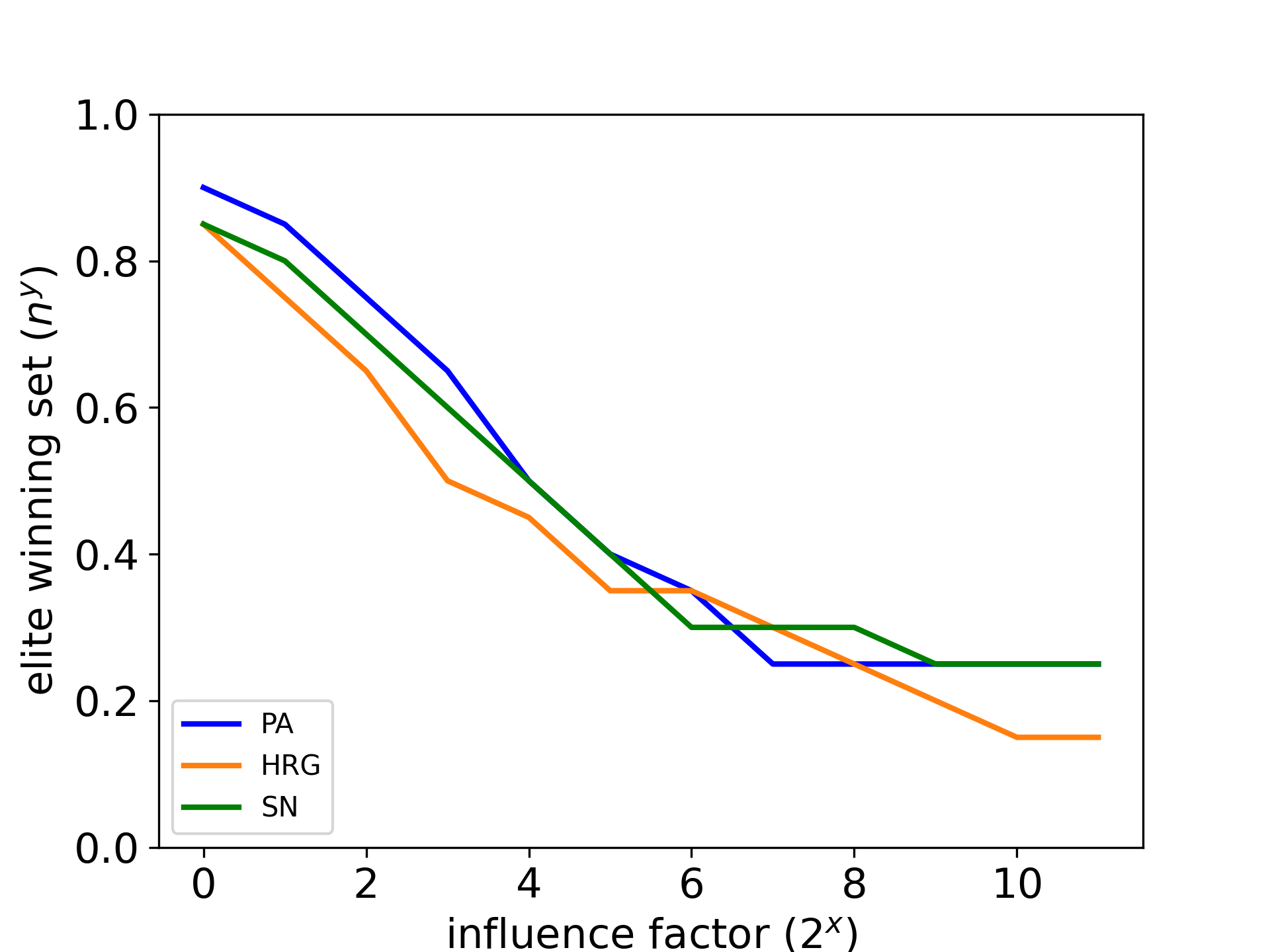}
		\caption{FB SN}
	\end{subfigure}
	\begin{subfigure}{.3\textwidth}
		\includegraphics[scale=0.40]{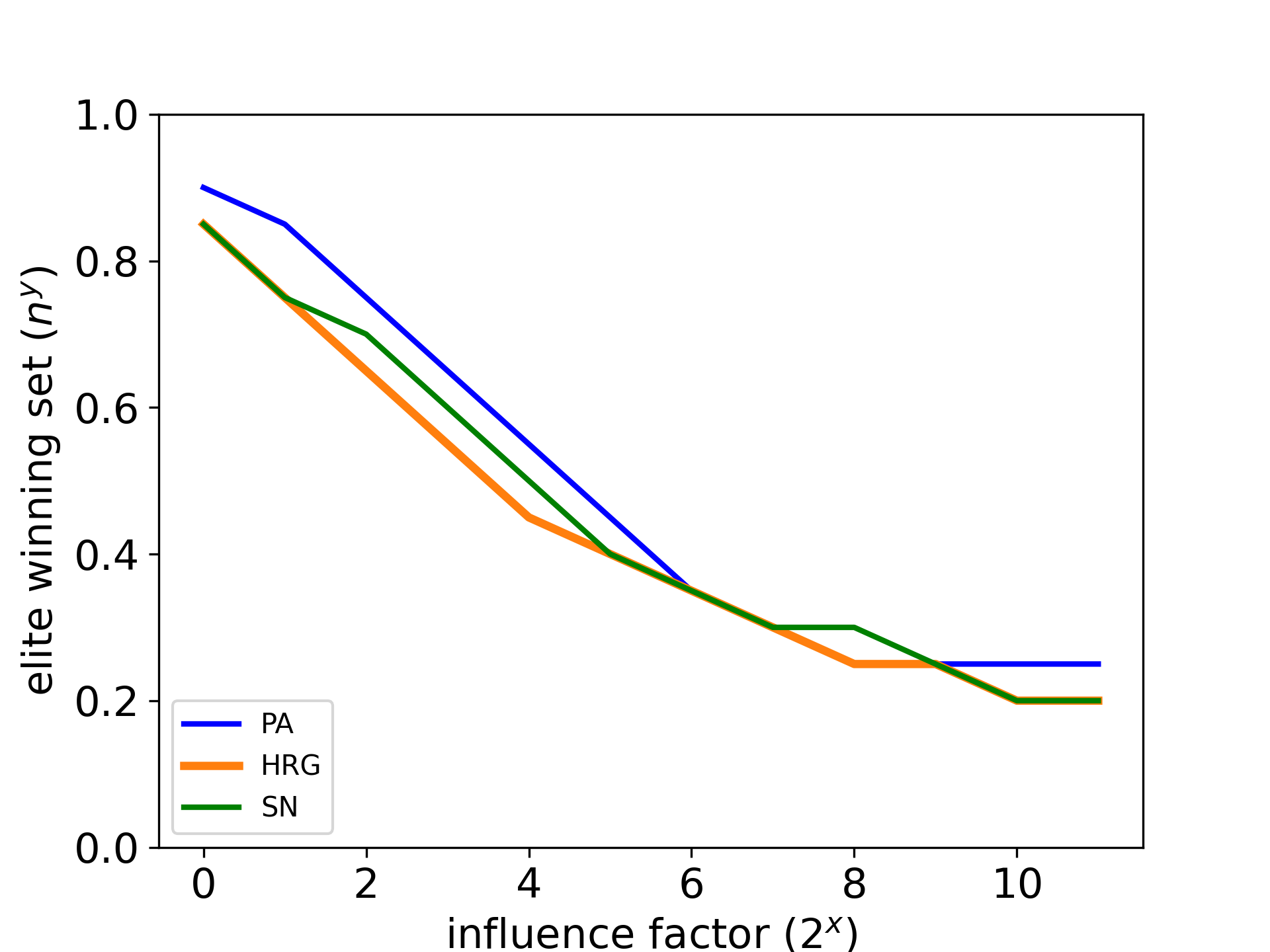}
		\caption{TW SN}
	\end{subfigure}
	\caption{The minimum size of a winning set of elite nodes whose influence factor is $r=2^x$ (while the influence factor of the rest of nodes is 1) \textbf{(a)} on SD SN \textbf{(b)} on FB SN \textbf{(c)} on TW SN, and HRG and PA with comparable parameters.}
\end{figure}


\begin{figure}[H]
	\centering
	\begin{subfigure}{.3\textwidth}
		\includegraphics[scale=0.40]{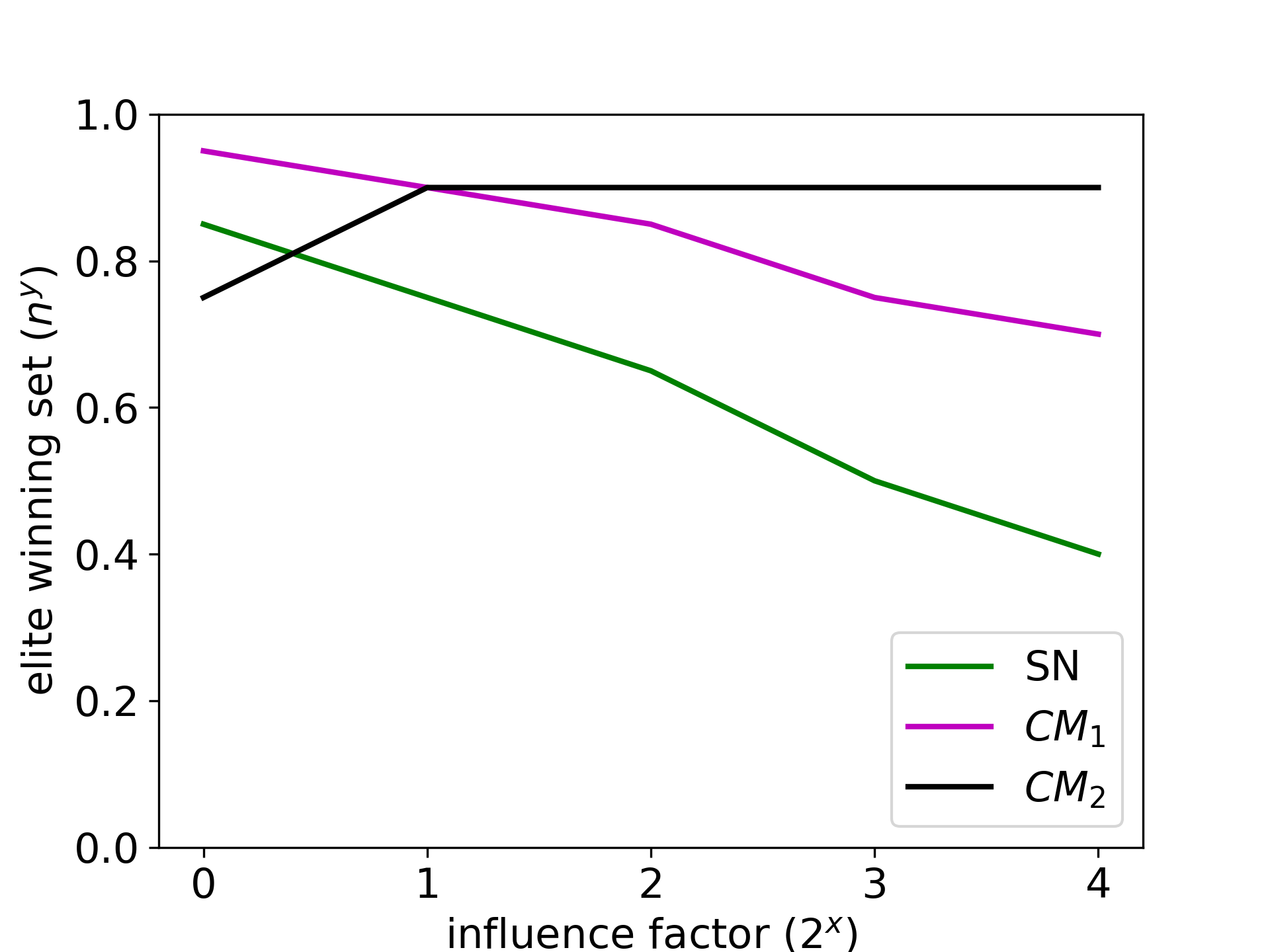}
		\caption{SD SN}
	\end{subfigure}
	\begin{subfigure}{.3\textwidth}
		\includegraphics[scale=0.40]{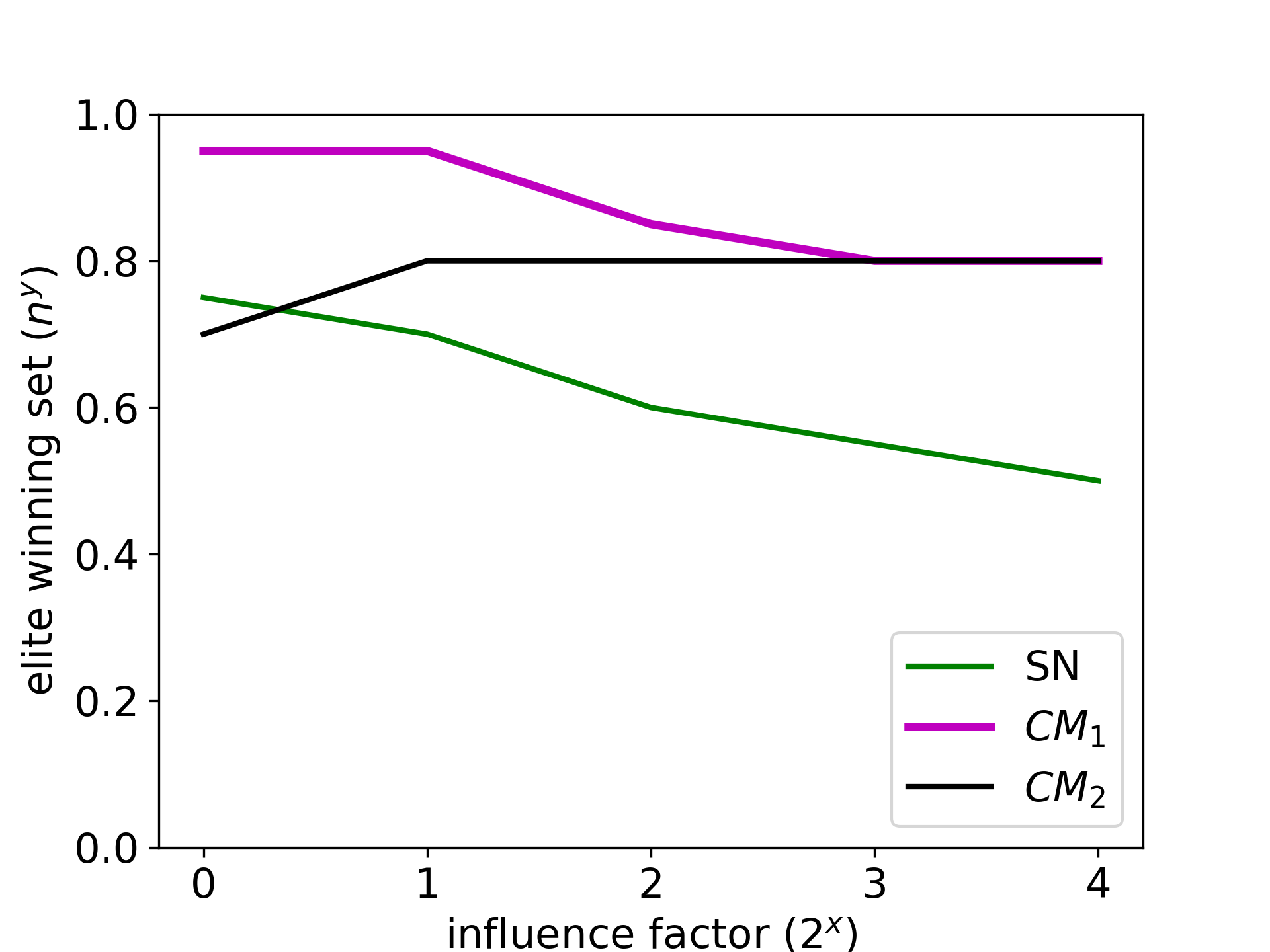}
		\caption{YT SN}
	\end{subfigure}
	\begin{subfigure}{.3\textwidth}
		\includegraphics[scale=0.40]{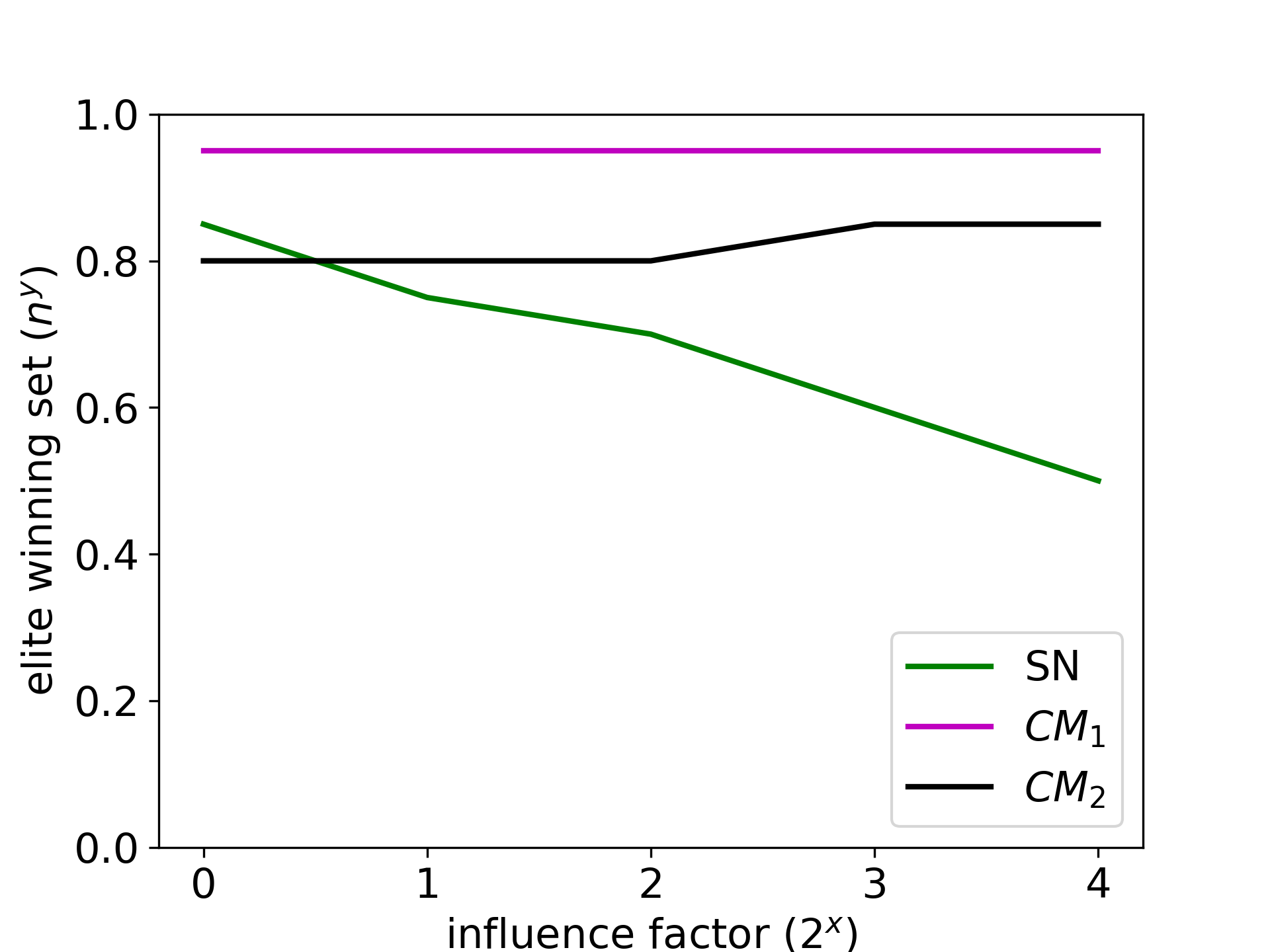}
		\caption{TW SN}
	\end{subfigure}
	\caption{\textbf{(a)} The minimum size of a winning set of elite nodes whose influence factor is $r=2^x$ (while the influence factor of the rest of nodes is 1) on SD SN, and the graphs $CM_1$ and $CM_2$, corresponding to our countermeasures. $CM_1$ is the union of SD SN and a RRG with degree $d=2r\bar{d}$ (where $\bar{d}$ is the average degree of SD SN). $CM_{2}$ denotes SD SN in which all nodes get assigned a stubbornness factor $ \gamma = 1 - 1/2r$. \textbf{(b)} and \textbf{(c)} provide the same plots for YT SN and TW SN, respectively.}
\end{figure}

\clearpage

\subsection{Random Initial Coloring Majority Model}
\label{Sec: Experiment 2}

\begin{figure}[H]
	\centering
	\begin{subfigure}{.3\textwidth}
		\includegraphics[scale=0.40]{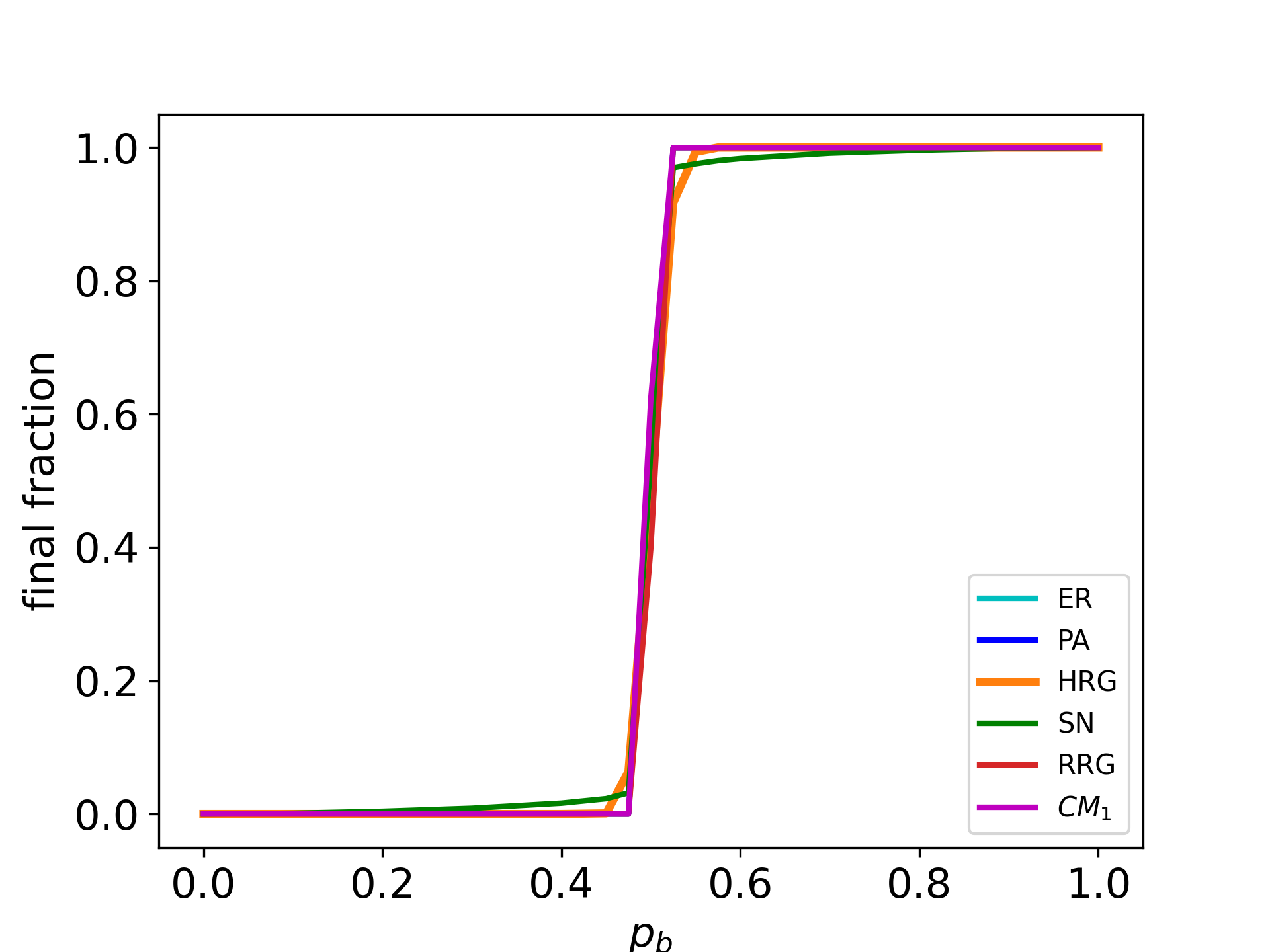}
		\caption{FB SN}
	\end{subfigure}
	\begin{subfigure}{.3\textwidth}
		\includegraphics[scale=0.40]{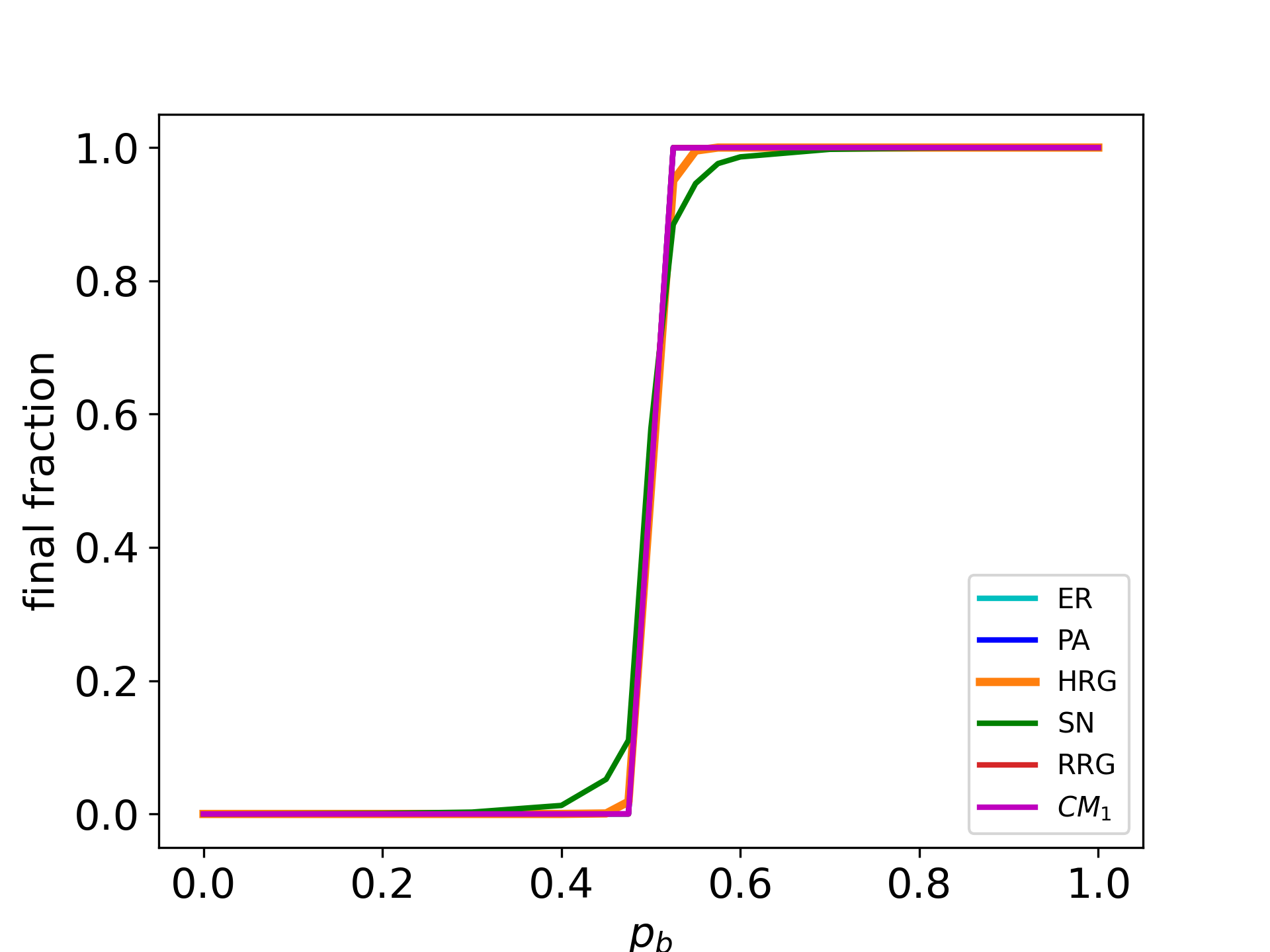}
		\caption{TW SN}
	\end{subfigure}
	\begin{subfigure}{.3\textwidth}
		\includegraphics[scale=0.40]{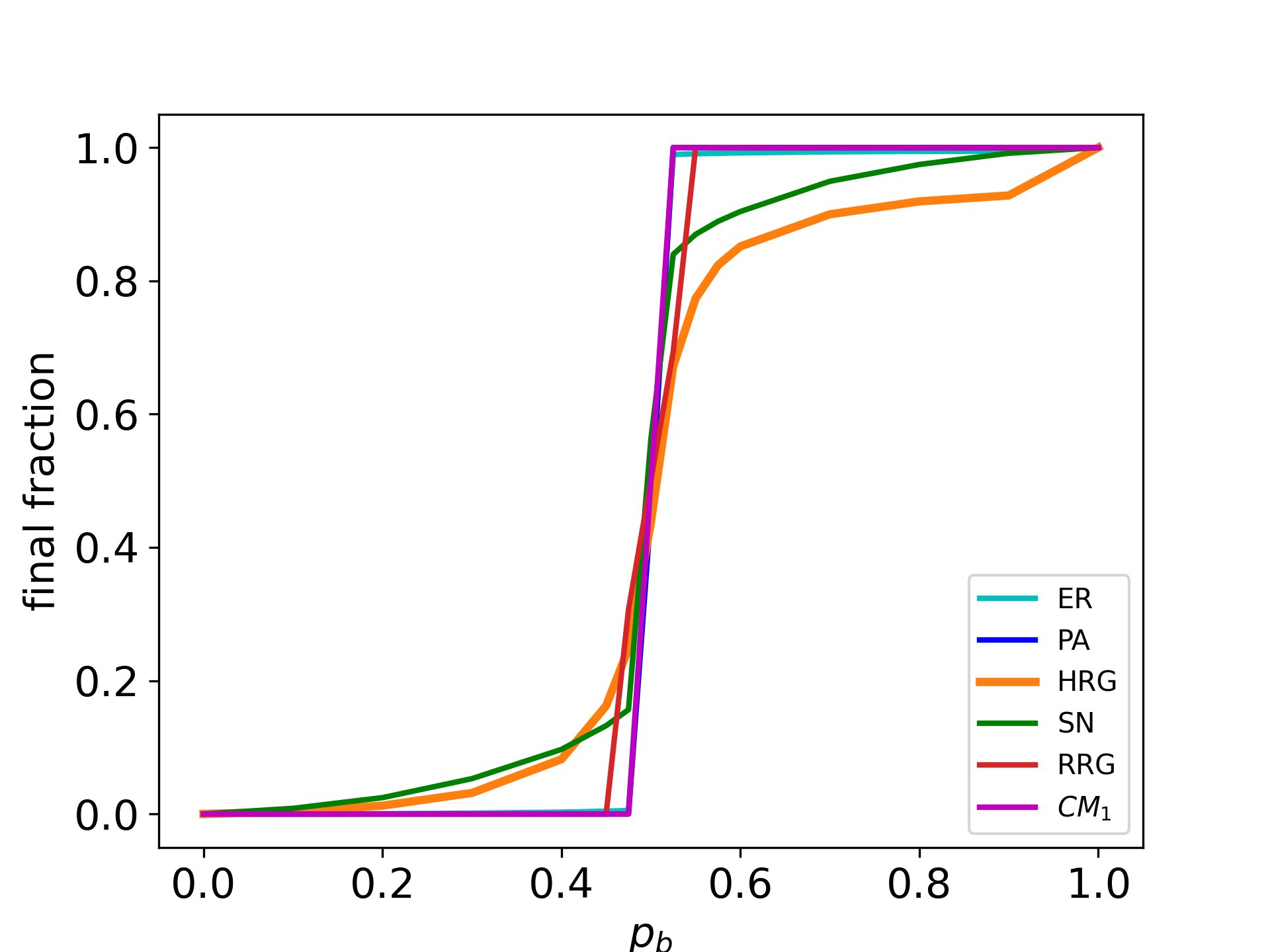}
		\caption{YT SN}
	\end{subfigure}
	\caption{\textbf{(a)} The final fraction of black nodes in the majority model with a random initial coloring on FB SN and ER, PA, HRG, and RRG with comparable parameters. $CM_1$ corresponds to the union of FB SN and $\mathcal{G}_{n,d}$ for $d = \bar{d}$ where $\bar{d}$ is the average degree of FB SN. \textbf{(b)} and \textbf{(c)} provide the same plots for TW SN and YT SN, respectively.}
\end{figure}


\subsection{Random Initial Coloring $(\psi_1, \psi_2)$-Majority Model}
\label{Sec: Experiment 3}
\vspace{8mm}

\begin{figure}[H]
	\centering
	\begin{subfigure}{.3\textwidth}
		\includegraphics[scale=0.40]{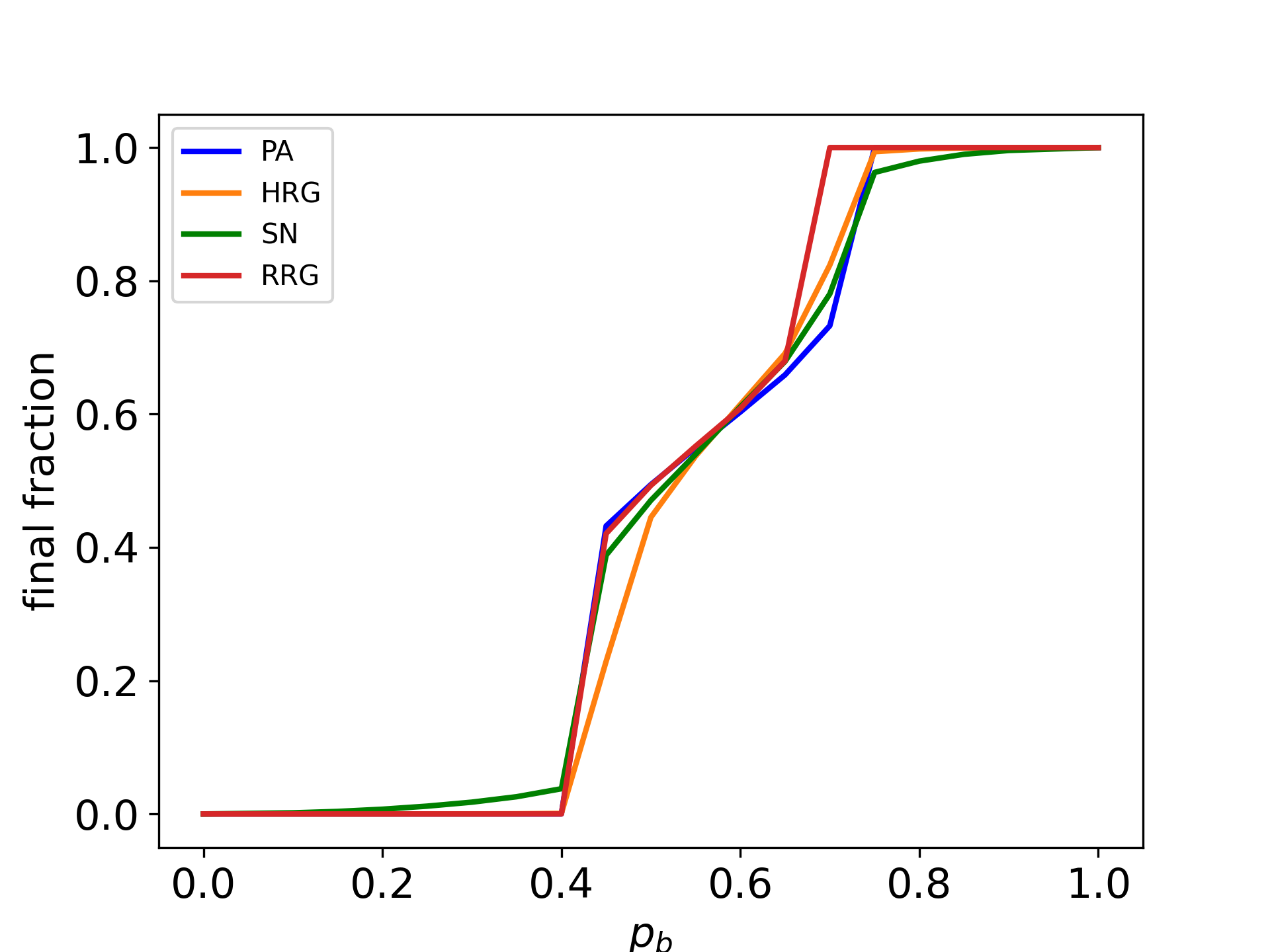}
		\caption{FB SN}
	\end{subfigure}
	\begin{subfigure}{.3\textwidth}
		\includegraphics[scale=0.40]{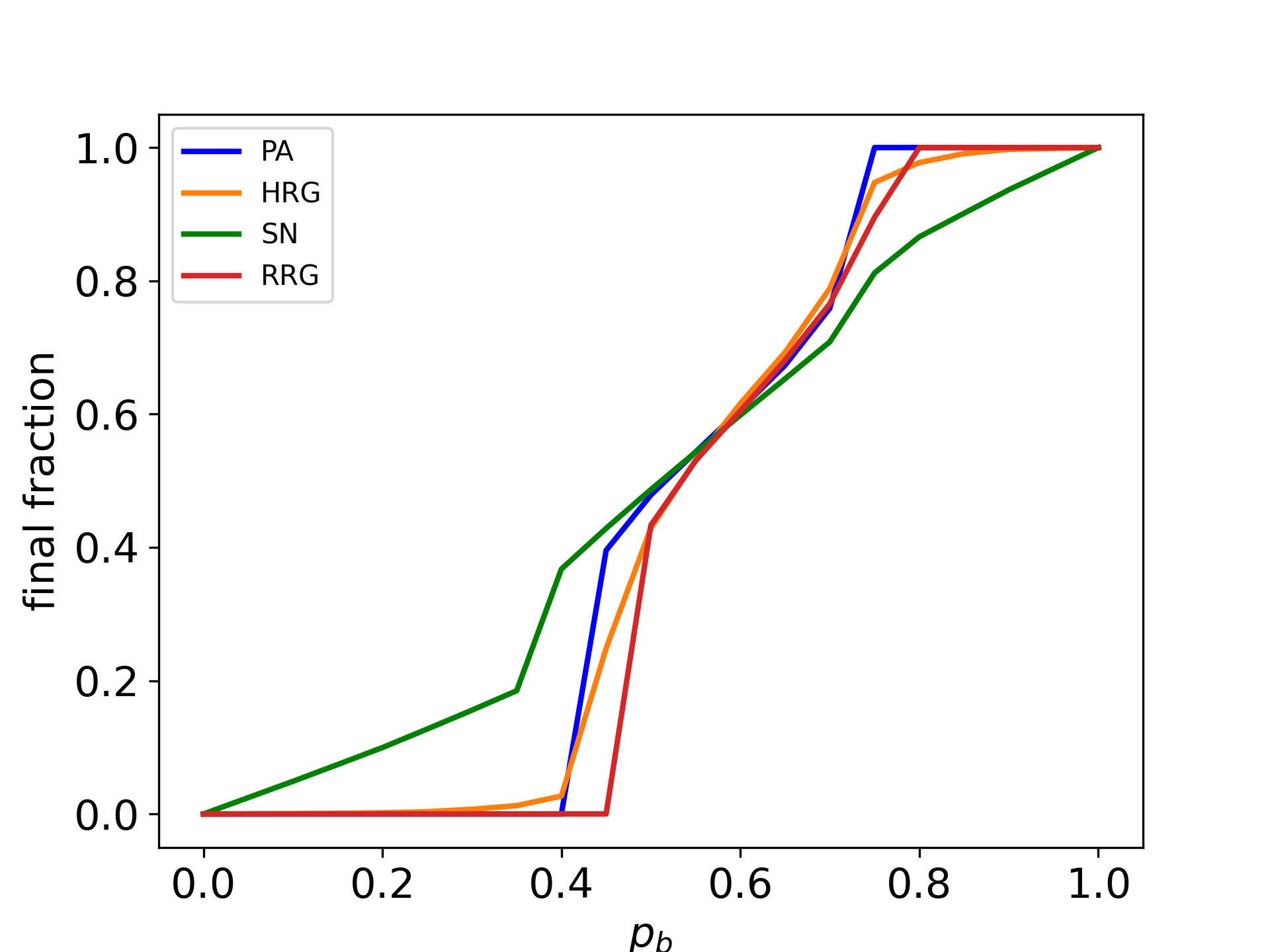}
		\caption{SD SN}
	\end{subfigure}
	\begin{subfigure}{.3\textwidth}
		\includegraphics[scale=0.40]{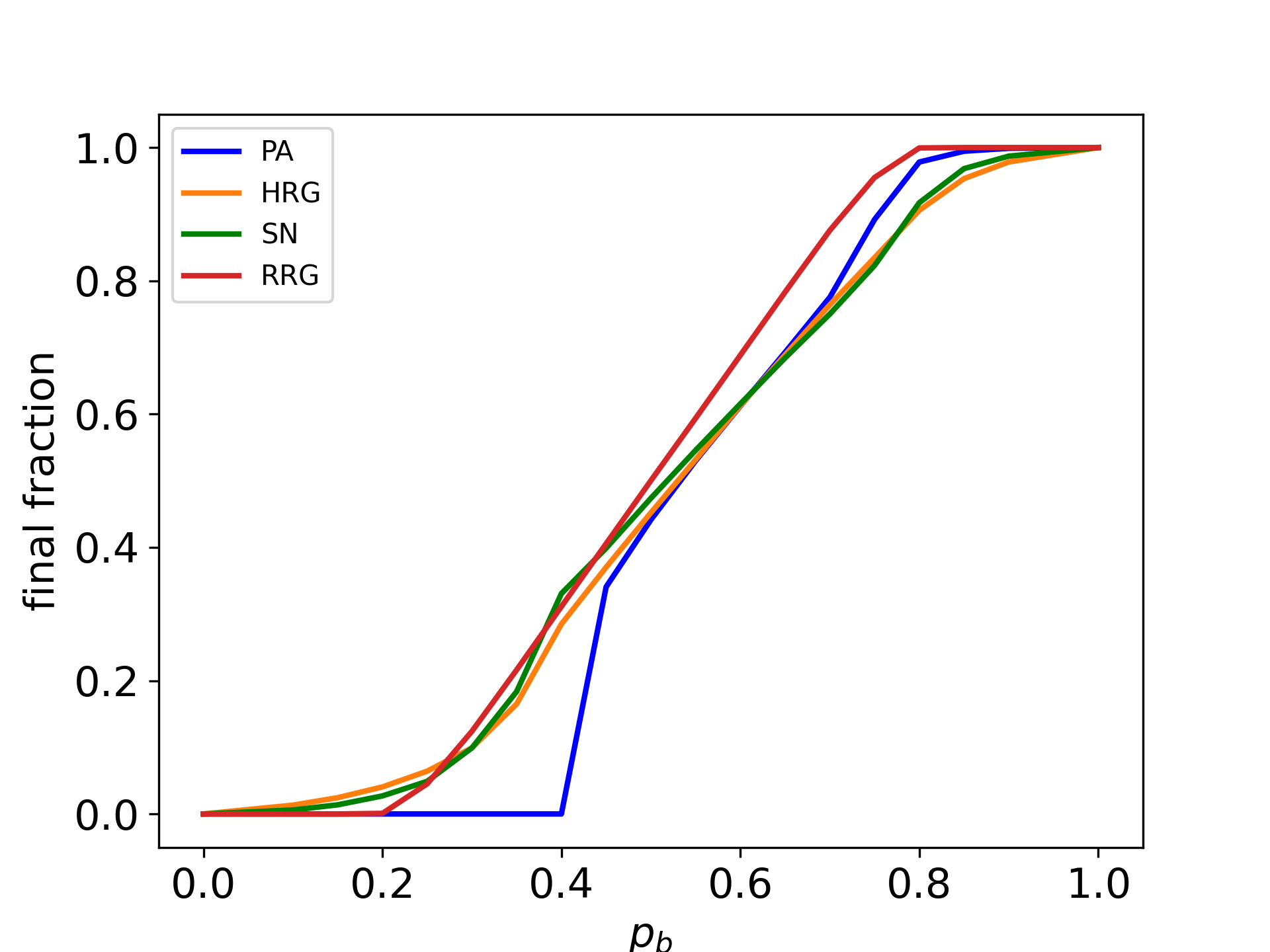}
		\caption{YT SN}
	\end{subfigure}
	\caption{\textbf{(a)} The final fraction of black nodes in the $(\psi_1, \psi_2)$-majority model for $\psi_1 = 0.7$ and $\psi_2 = 0.8$ with a random initial coloring on FB SN and ER, PA, HRG, and RRG with comparable parameters. \textbf{(b)} and \textbf{(c)} provide the same plots for SD SN and YT SN, respectively.}
\end{figure}

\clearpage

\subsection{Expected Stabilization Time Majority Model}
\label{Sec: Experiment 4}

\begin{figure}[H]
	\centering
	\begin{subfigure}{.3\textwidth}
		\includegraphics[scale=0.4]{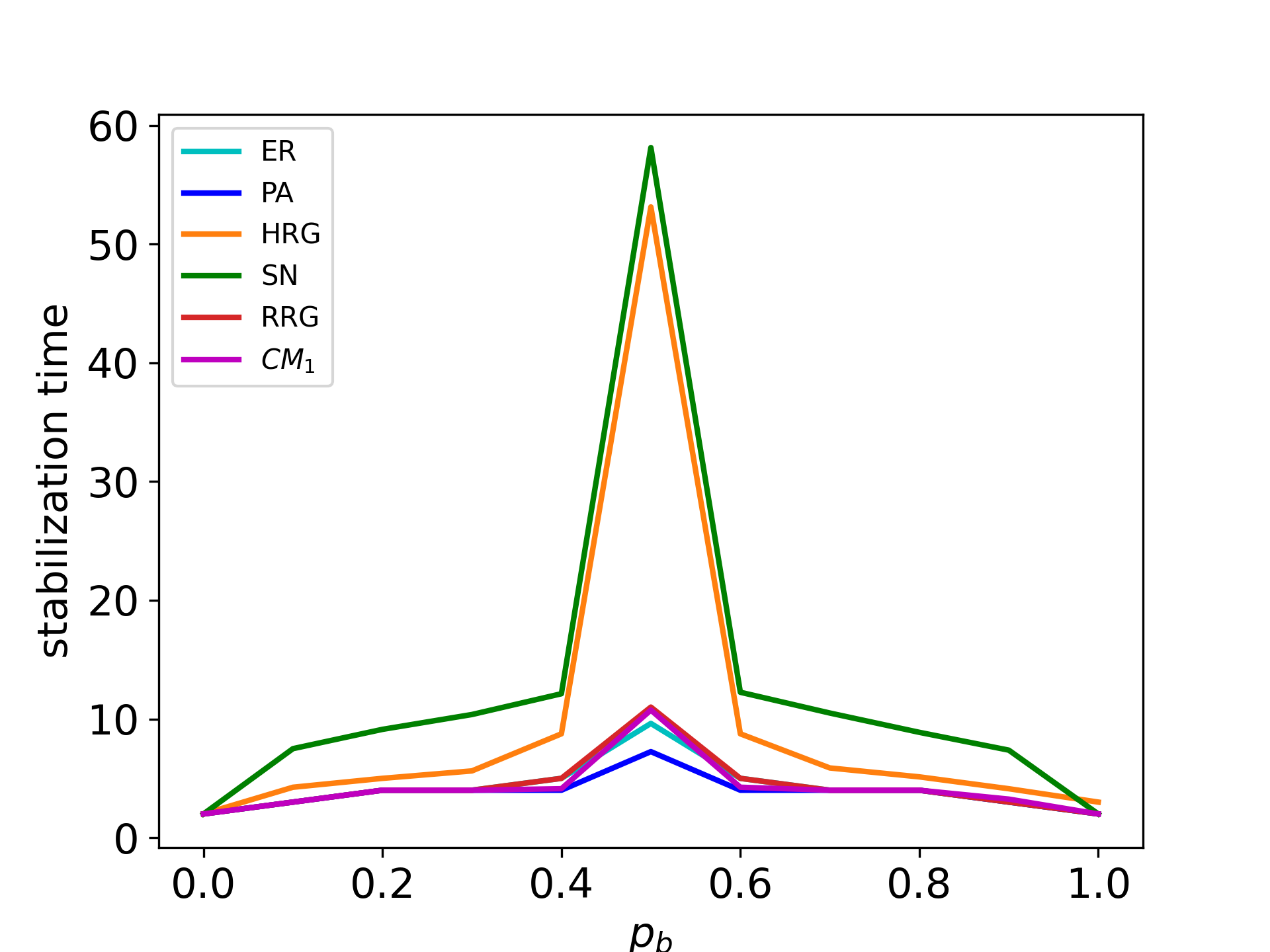}
		\caption{FB SN}
	\end{subfigure}
	\begin{subfigure}{.3\textwidth}
		\includegraphics[scale=0.4]{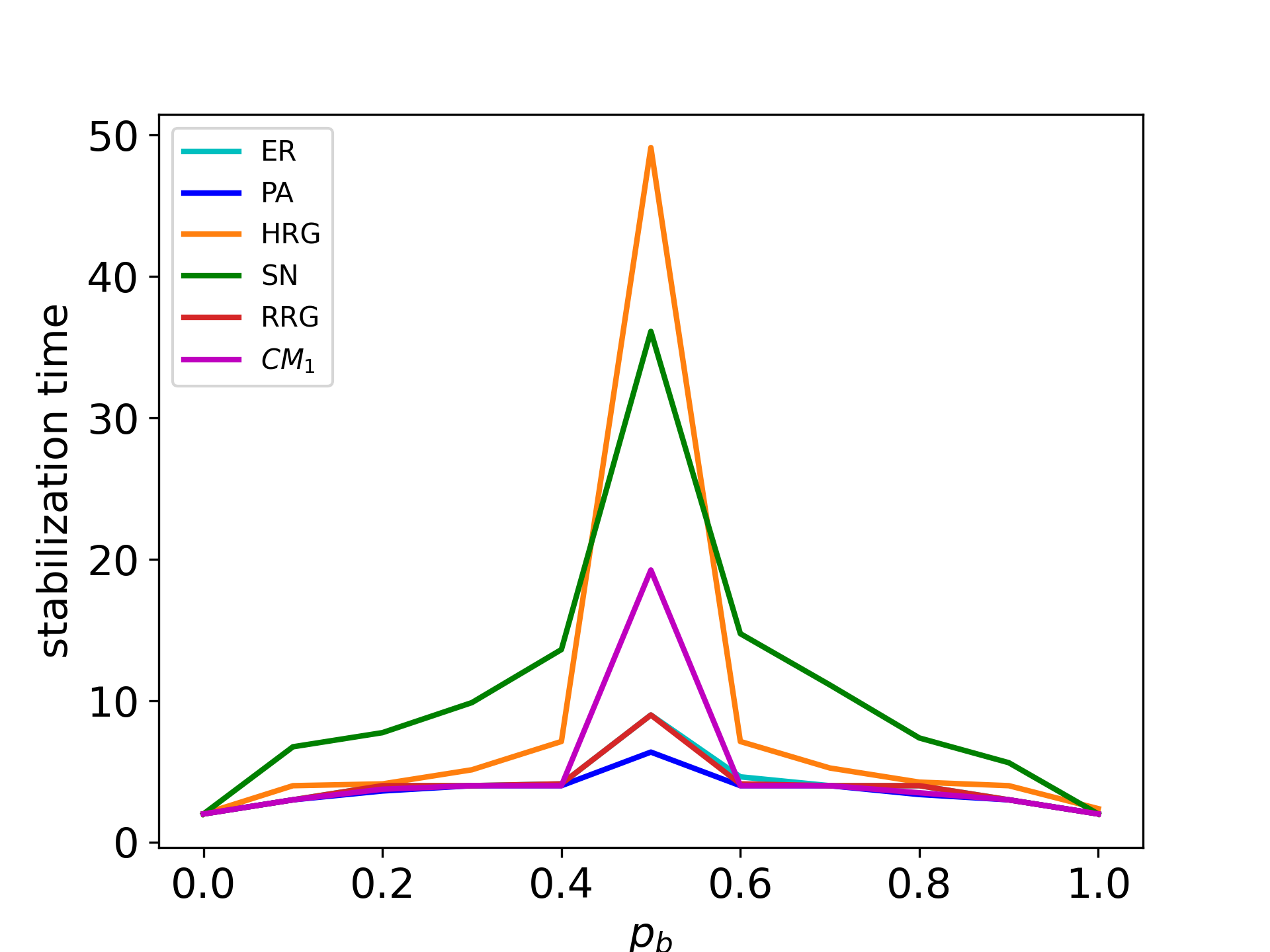}
		\caption{TW SN}
	\end{subfigure}
	\begin{subfigure}{.3\textwidth}
		\includegraphics[scale=0.4]{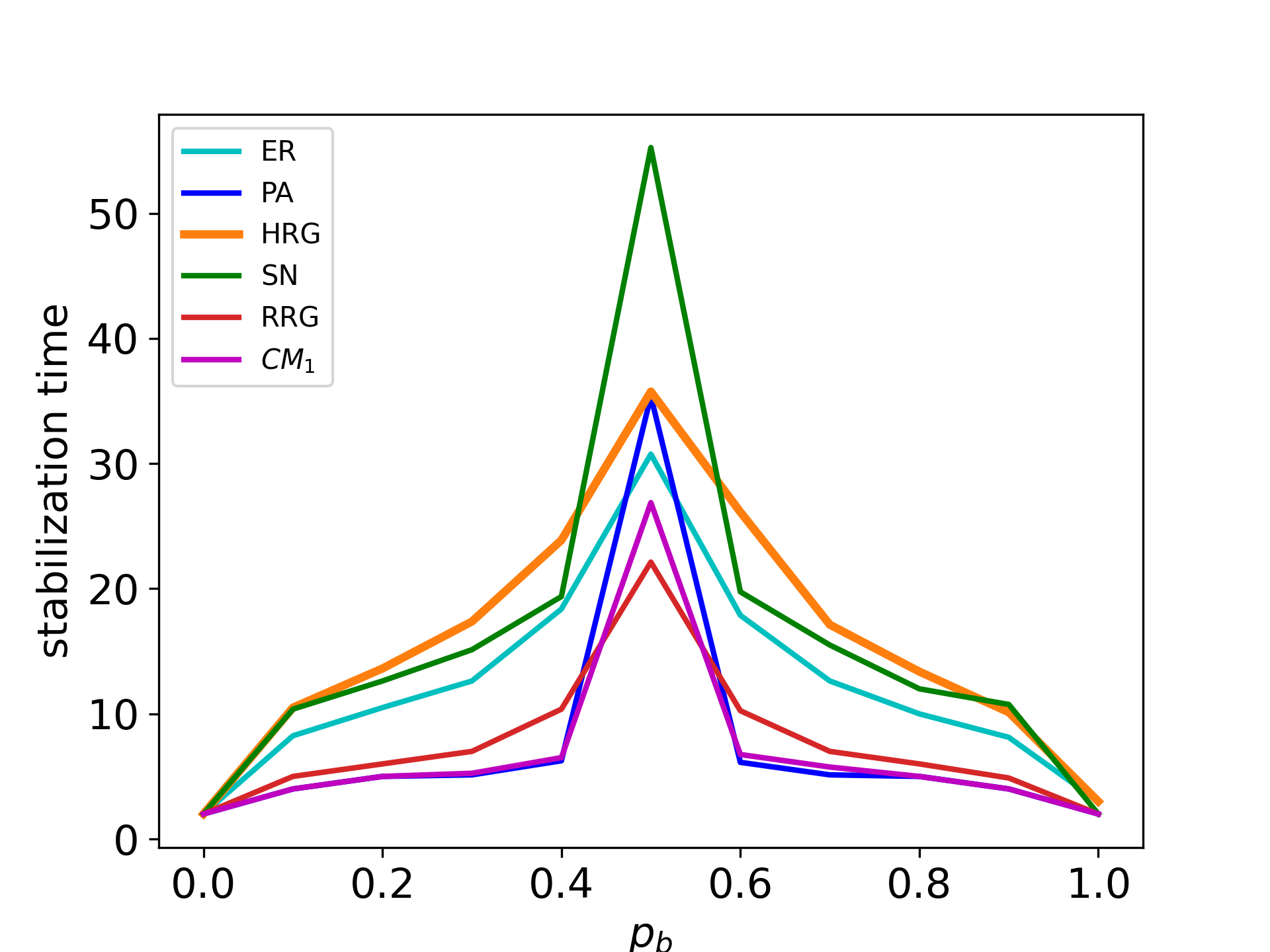}
		\caption{YT SN}
	\end{subfigure}
	\caption{ \textbf{(a)} The expected stabilization time of the majority model with a random initial coloring on FB SN and other graph types with comparable parameters for different values of $p_b$. Here $CM_1$ again denotes the union of FB SN and RRG with degree $d = \bar{d}$ where $\bar{d}$ is the average degree of FB SN. \textbf{(b)} and \textbf{(c)} provide the same plots for TW SN and YT SN, respectively.}
\end{figure}

\vspace{8mm}

\subsection{Expected Stabilization time in the $(\psi_1, \psi_2)$-Majority Model}
\label{Sec: Experiment 5}

\begin{figure}[H]
	\centering
	\begin{subfigure}{.3\textwidth}
		\includegraphics[scale=0.4]{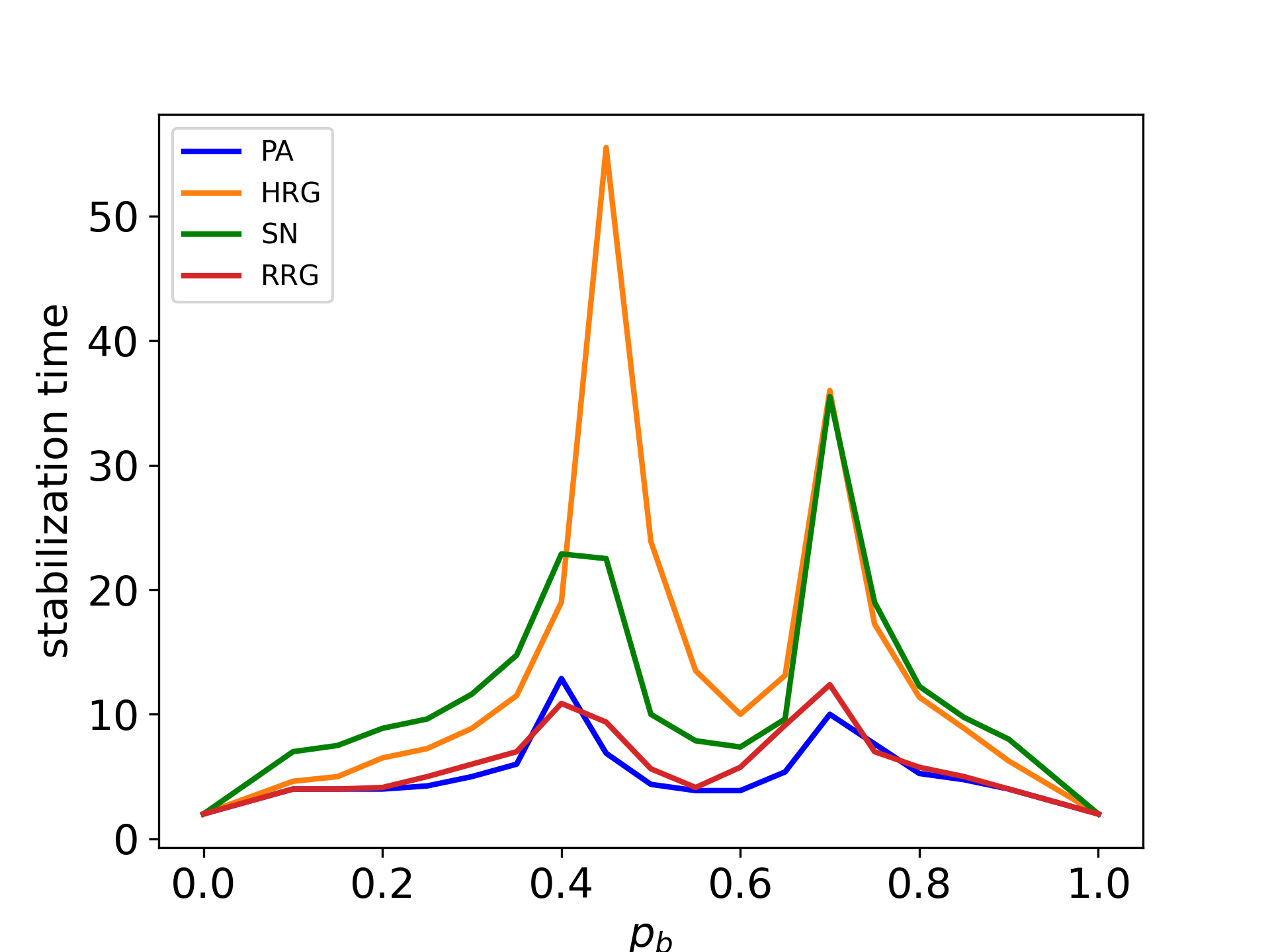}
		\caption{FB SN}
	\end{subfigure}
	\begin{subfigure}{.3\textwidth}
		\includegraphics[scale=0.4]{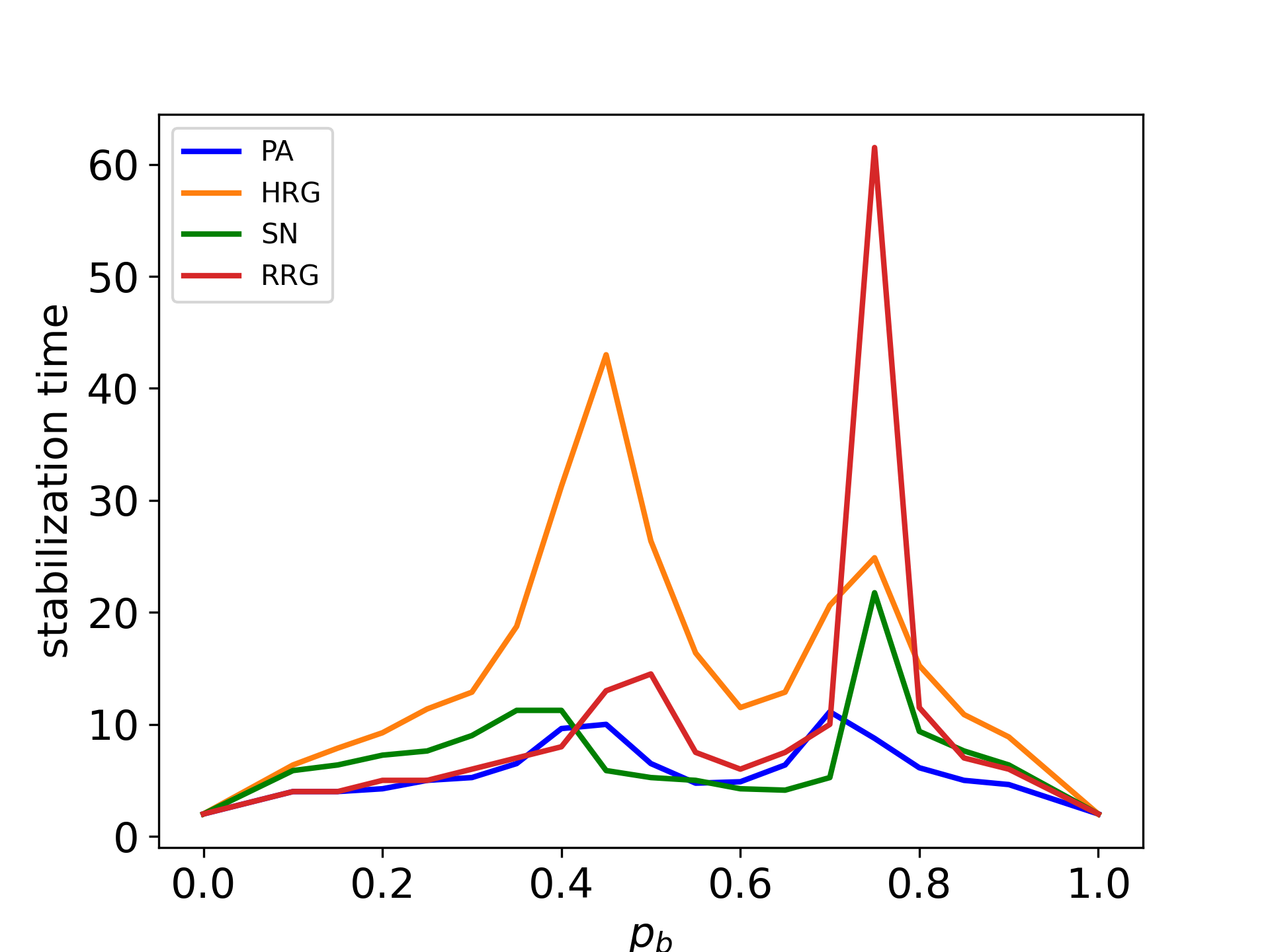}
		\caption{SD SN}
	\end{subfigure}
	\begin{subfigure}{.3\textwidth}
		\includegraphics[scale=0.4]{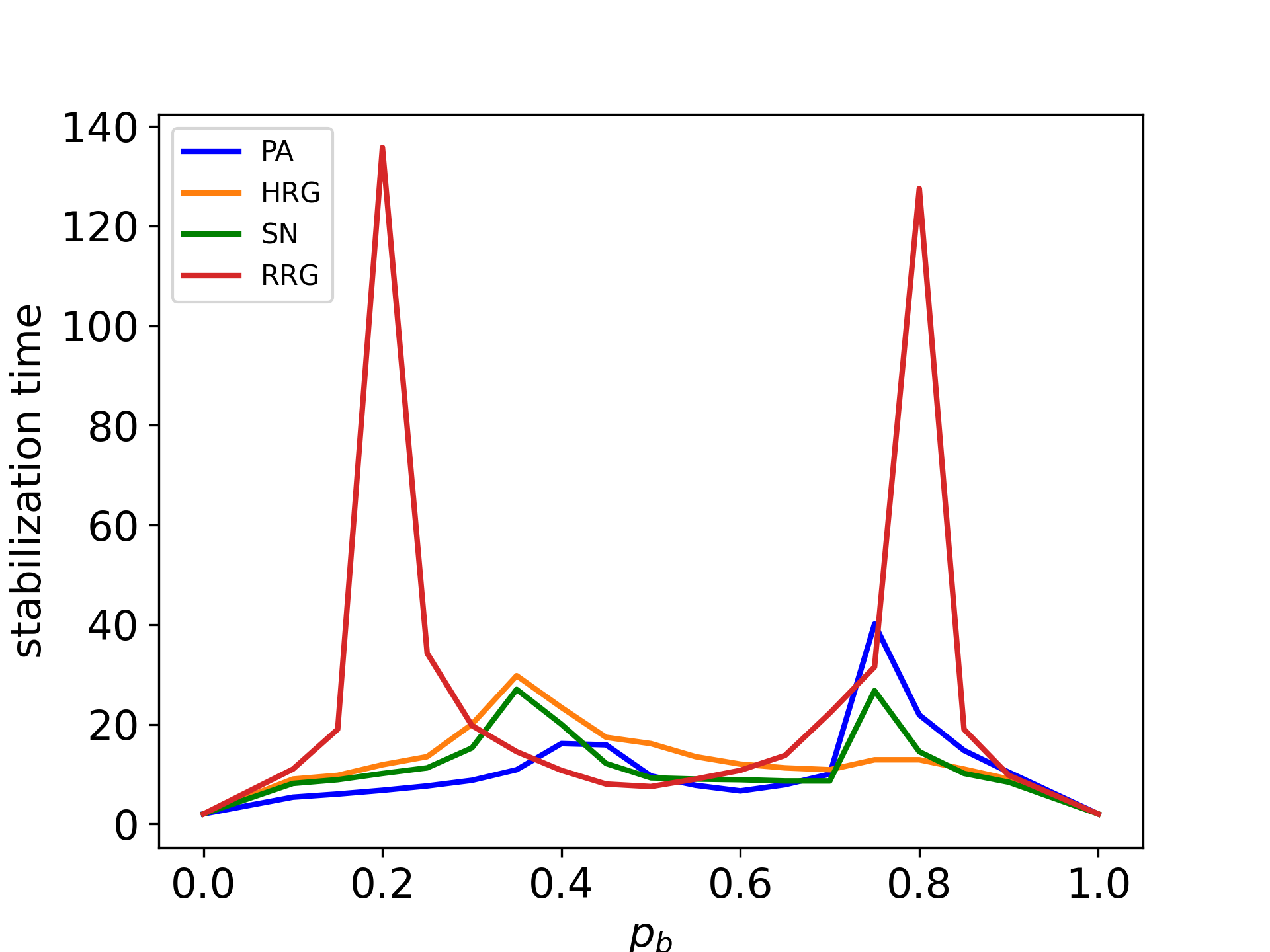}
		\caption{YT SN}
	\end{subfigure}
	\caption{\textbf{(a)} The expected stabilization time of the $(\psi_1, \psi_2)$-majority model for $\psi_1 = 0.7$ and $\psi_2 = 0.8$ on FB SN and other graph types with comparable parameters for different values of $p_b$. \textbf{(b)} and \textbf{(c)} provide the same plots for SD SN and YT SN, respectively.}
\end{figure}

\end{document}